\newcommand{\mbA}{\bm{A}}
\newcommand{\mbB}{\bm{B}}
\newcommand{\mbC}{\bm{C}}
\newcommand{\mbD}{\bm{D}}
\newcommand{\mbF}{\bm{F}}
\newcommand{\mbG}{\bm{G}}
\newcommand{\mbH}{\bm{H}}
\newcommand{\mbI}{\bm{I}}
\newcommand{\mbN}{\bm{N}}
\newcommand{\mbP}{\bm{P}}
\newcommand{\mbX}{\bm{X}}
\newcommand{\mbY}{\bm{Y}}
\newcommand{\mba}{\bm{a}}
\newcommand{\mbc}{\bm{c}}
\newcommand{\mbd}{\bm{d}}
\newcommand{\mbh}{\bm{h}}
\newcommand{\mbn}{\bm{n}}
\newcommand{\mbs}{\bm{s}}
\newcommand{\mbx}{\bm{x}}
\newcommand{\mby}{\bm{y}}
\newcommand{\mbSigma}{{\bm{\Sigma}}}
\newcommand{\mbdelta}{{\bm{\delta}}}
\newcommand{\mbmu}{{\bm{\mu}}}
\newcommand{\mbzero}{{\bm{0}}}
\newcommand{\calB}{\mathcal{B}}
\newcommand{\calC}{\mathcal{C}}
\newcommand{\calK}{\mathcal{K}}
\newcommand{\calM}{\mathcal{M}}
\newcommand{\calN}{\mathcal{N}}
\newcommand{\calO}{\mathcal{O}}
\Crefname{figure}{Fig.}{Figs.}
\pgfplotsset{
	discard if/.style 2 args={
        x filter/.append code={
            \edef\tempa{\thisrow{#1}}
            \edef\tempb{#2}
            \ifx\tempa\tempb
                
            \fi
        }
    },
    discard if not/.style 2 args={
        x filter/.append code={
            \edef\tempa{\thisrow{#1}}
            \edef\tempb{#2}
            \ifx\tempa\tempb
            \else
                
            \fi
        }
    }
}
\pgfplotsset{tick label style={font=\small},label style={font=\small},legend style={font=\footnotesize}}
\pgfplotsset{compat=1.15}
\DeclareMathOperator{\diag}{diag}
\DeclareMathOperator{\expec}{E}
\DeclareMathOperator{\vect}{vec}
\newcommand*{\C}{\mathbb{C}}
\newcommand*{\N}{\mathbb{N}}
\newcommand*{\R}{\mathbb{R}}
\newcommand{\Brv}[1]{\bm #1}
\newcommand{\covhi}{\mbC_i}
\newcommand{\covhk}{\mbC_k}
\newcommand{\herm}{{\operatorname{H}}}
\newcommand{\hest}{\hat{\mbh}^{(K)}}
\newcommand{\hhat}{\hat{\mbh}}
\newcommand{\Krx}{K_{\text{rx}}}
\newcommand{\Ktx}{K_{\text{tx}}}
\newcommand{\meanhi}{\mbmu_i}
\newcommand{\meanhk}{\mbmu_k}
\newcommand{\Nrx}{N_{\text{rx}}}
\newcommand{\Ntx}{N_{\text{tx}}}
\newcommand{\tp}{{\operatorname{T}}}
\newacronym{amp}{AMP}{approximate message passing}
\newacronym{cae}{CAE}{concrete autoencoder}
\newacronym{cnn}{CNN}{convolutional neural network}
\newacronym{cme}{CME}{conditional mean estimator}
\newacronym{cs}{CS}{compressive sensing}
\newacronym{dft}{DFT}{discrete Fourier transform}
\newacronym{em}{EM}{expectation-maximization}
\newacronym{fdd}{FDD}{frequency division duplex}
\newacronym{gmm}{GMM}{Gaussian mixture model}
\newacronym{lmmse}{LMMSE}{linear minimum mean square error}
\newacronym{los}{LOS}{line of sight}
\newacronym{ls}{LS}{least squares}
\newacronym{mimo}{MIMO}{multiple-input multiple-output}
\newacronym{mpc}{MPC}{multi-path component}
\newacronym{mse}{MSE}{mean square error}
\newacronym{nlos}{NLOS}{non-line of sight}
\newacronym{o2i}{O2I}{outdoor-to-indoor}
\newacronym{omp}{OMP}{orthogonal matching pursuit}
\newacronym{pdf}{PDF}{probability density function}
\newacronym{simo}{SIMO}{single-input multiple-output}
\newacronym{siso}{SISO}{single-input single-output}
\newacronym{snr}{SNR}{signal-to-noise ratio}
\newacronym{ula}{ULA}{uniform linear array}
\newtheorem{lemma}{Lemma}
\newtheorem{theorem}{Theorem}
\definecolor{myblack}{RGB}{70,70,70}
\definecolor{myblue}{RGB}{65,105,225}
\definecolor{mygreen}{RGB}{0,139,139}
\definecolor{myorange}{RGB}{255,150,0}
\definecolor{myred}{RGB}{255,69,0}
\definecolor{mylila}{RGB}{153,50,204}
\definecolor{TUMMediumGray}{RGB}{128,128,128}
\definecolor{TUMBeamerBlue}{rgb}{0.00,0.60,1.00}
\definecolor{TUMBeamerYellow}{rgb}{1.00,0.71,0.00}
\definecolor{TUMBeamerOrange}{rgb}{1.00,0.50,0.00}
\definecolor{TUMBeamerRed}{rgb}{0.90,0.20,0.09}
\definecolor{TUMBeamerGreen}{rgb}{0.57,0.67,0.42}
\newcommand{\legendAmp}         {\footnotesize AMP}
\newcommand{\legendCae}         {\footnotesize CAE}
\newcommand{\legendChannelnet}  {\footnotesize ChannelNet}
\newcommand{\legendCnn}         {\footnotesize CNN}
\newcommand{\legendGenielmmse}  {\footnotesize gen. LMMSE}
\newcommand{\legendGlobalcov}   {\footnotesize sample cov.}
\newcommand{\legendGmm}         {\footnotesize GMM}
\newcommand{\legendGmmDiag}     {\footnotesize circ. GMM}
\newcommand{\legendGmmKron}     {\footnotesize Kron. GMM}
\newcommand{\legendLs}          {\footnotesize LS}
\newcommand{\legendOmp}         {\footnotesize gen. OMP}
\newcommand{\plotwidth}                         {0.95\columnwidth}
\newcommand{\plotheightSimoOnepath}             {0.65\columnwidth}
\newcommand{\plotheightSimoThreepath}           {0.65\columnwidth}
\newcommand{\plotheightSimoQuadriga}            {0.65\columnwidth}
\newcommand{\plotheightSimoComponents}          {0.45\columnwidth}
\newcommand{\plotheightMimo}                    {0.55\columnwidth}
\newcommand{\plotheightMimoTrainsamples}        {0.4\columnwidth}
\newcommand{\plotheightWidebandComponents}      {0.55\columnwidth}
\newcommand{\plotheightWidebandFixedSpeed}      {0.7\columnwidth}
\newcommand{\plotheightWidebandVariousSpeed}    {0.7\columnwidth}
\newcommand{\lineWidth}{1.0pt}
\newcommand{\marksize}{1.6pt}
\tikzset{amp/.style={mark options={solid},color=TUMBeamerYellow, line width=\lineWidth, mark size=\marksize, dashdotted}}
\tikzset{cae/.style={mark options={solid},color=TUMBeamerGreen, line width=\lineWidth, mark=triangle, mark size=\marksize, dashed}}
\tikzset{channelnet/.style={mark options={solid},color=black, line width=\lineWidth, mark=x, mark size=\marksize, dotted}}
\tikzset{cnn/.style={mark options={solid},color=black, line width=\lineWidth, mark=x, mark size=\marksize, dotted}}
\tikzset{genielmmse/.style={mark options={solid},color=TUMBeamerRed, line width=\lineWidth}}
\tikzset{globalcov/.style={mark options={solid},color=TUMBeamerOrange, line width=\lineWidth, mark=o, mark size=\marksize, dashed}}
\tikzset{gmm/.style={mark options={solid},color=TUMBeamerBlue, line width=\lineWidth, mark=square, mark size=\marksize, dashed}}
\tikzset{gmmdiag/.style={mark options={solid},color=mylila, line width=\lineWidth, mark=diamond, mark size=\marksize, dashdotted}}
\tikzset{ls/.style={mark options={solid},color=TUMMediumGray, line width=\lineWidth, mark=pentagon, mark size=\marksize, dashed}}
\tikzset{omp/.style={mark options={solid},color=TUMBeamerGreen, line width=\lineWidth, mark=triangle, mark size=\marksize, dashed}}
\tikzset{onePlow/.style={mark options={solid},color=TUMBeamerBlue, line width=\lineWidth, mark=square, mark size=\marksize, solid}}
\tikzset{onePhigh/.style={mark options={solid},color=TUMBeamerBlue, line width=\lineWidth, mark=square, mark size=\marksize, dashed}}
\tikzset{threePlow/.style={mark options={solid},color=black, line width=\lineWidth, mark=x, mark size=\marksize, solid}}
\tikzset{threePhigh/.style={mark options={solid},color=black, line width=\lineWidth, mark=x, mark size=\marksize, dashed}}
\tikzset{quaDlow/.style={mark options={solid},color=TUMBeamerRed, line width=\lineWidth, mark=diamond, mark size=\marksize, solid}}
\tikzset{quaDhigh/.style={mark options={solid},color=TUMBeamerRed, line width=\lineWidth, mark=diamond, mark size=\marksize, dashed}}
\begin{document}

% paper title
\title{An Asymptotically MSE-Optimal Estimator\\based on Gaussian Mixture Models}

% author names and IEEE memberships
\author{Michael~Koller, Benedikt~Fesl, Nurettin~Turan, and
        Wolfgang~Utschick,~\IEEEmembership{Fellow,~IEEE}% <-this % stops a space
\thanks{This work was supported by the Deutsche Forschungsgemeinschaft under
grant UT 36/21.}%
\thanks{Preliminary results have been published at ICASSP'22~\cite{KoFeTuUt22}.}%
\thanks{The authors are with Professur für Methoden der Signalverarbeitung, Technische Universität München,
80333 München, Germany, e-mail: \{michael.koller,benedikt.fesl,nurettin.turan,utschick\}@tum.de}}
% \thanks{Manuscript received April 19, 2005; revised August 26, 2015.}}

% The paper headers
\markboth{}%IEEE Transactions on Signal Processing,~Vol.~, No.~,}%
{Koller \MakeLowercase{\textit{et al.}}}

% If you want to put a publisher's ID mark on the page you can do it like
% this:
%\IEEEpubid{0000--0000/00\$00.00~\copyright~2015 IEEE}
% Remember, if you use this you must call \IEEEpubidadjcol in the second
% column for its text to clear the IEEEpubid mark.

% make the title area
\maketitle

%%%%%%%%%%%%%%%%%%%%%%%%%%%%%%%%%%%%%%%%%%%%%%%%%%%%%%%%%%%%%%%%%%%%%%%%%%%%%%%
%%%%%%%%%%%%%%%%%%%%%%%%%%%%%%%%%%%%%%%%%%%%%%%%%%%%%%%%%%%%%%%%%%%%%%%%%%%%%%%
\begin{abstract}
This paper investigates a channel estimator based on \acp{gmm} in the context of linear inverse problems with additive Gaussian noise.
We fit a \ac{gmm} to given channel samples to obtain an analytic \ac{pdf} which approximates the true channel \ac{pdf}.
Then, a \ac{cme} corresponding to this approximating \ac{pdf} is computed in closed form and used as an approximation of the optimal \ac{cme} based on the true channel \ac{pdf}.
This optimal \ac{cme} cannot be calculated analytically because the true channel \ac{pdf} is generally unknown.
We present mild conditions which allow us to prove the convergence of the \ac{gmm}-based \ac{cme} to the optimal \ac{cme} as the number of \ac{gmm} components is increased.
Additionally, we investigate the estimator's computational complexity and present simplifications based on common model-based insights.
Further, we study the estimator's behavior in numerical experiments including \ac{mimo} and wideband systems.
\end{abstract}

\begin{IEEEkeywords}
asymptotic convergence,
conditional mean channel estimation,
Gaussian mixture models,
machine learning,
spatial channel model
\end{IEEEkeywords}

% For peer review papers, you can put extra information on the cover
% page as needed:
% \ifCLASSOPTIONpeerreview
% \begin{center} \bfseries EDICS Category: 3-BBND \end{center}
% \fi
%
% For peerreview papers, this IEEEtran command inserts a page break and
% creates the second title. It will be ignored for other modes.
\IEEEpeerreviewmaketitle

% reset all acronyms
\glsresetall

%%%%%%%%%%%%%%%%%%%%%%%%%%%%%%%%%%%%%%%%%%%%%%%%%%%%%%%%%%%%%%%%%%%%%%%%%%%%%%%
%%%%%%%%%%%%%%%%%%%%%%%%%%%%%%%%%%%%%%%%%%%%%%%%%%%%%%%%%%%%%%%%%%%%%%%%%%%%%%%
\section{Introduction}

Channel estimation plays a critical role in future mobile communications systems, e.g., \cite{RuPeLaLaMaEdTu13,ArDeAxMo14,SaRa16}.
The \ac{mse} minimizing channel estimator is known as \ac{cme}.
Computing the \ac{cme} in closed form requires analytic knowledge of the channel \ac{pdf}.
Even if the \ac{pdf} was known, calculating the \ac{cme} might not be possible analytically or not be tractable practically.
Increasingly, advanced channel models (e.g., \cite{3gpp}) or simulators (e.g., \cite{QuaDRiGa1, QuaDRiGa2,Al19}) are used to generate large amounts of realistic channel samples.
In a real application, channel samples can, for example, be collected at the base station to be used in addition to or instead of the simulated data.
Importantly, such data represent the whole scenario (or environment) in which the base station is placed.
It is thus interesting to investigate data-based algorithms to design channel estimators which are applicable to a whole scenario.
Many estimators have been proposed in this context
(cf., e.g., \Cref{sec:sota} for details).
In particular, estimators based on \ac{cs} and on machine learning have recently been proposed (see, e.g., \cite{BuHuMuDa18,HaMaZwDa20}).
To our knowledge, the algorithms' optimality has not been studied if an arbitrary channel \ac{pdf} is assumed.

In this paper, we study a \ac{gmm}-based channel estimator.
The estimator itself has already been investigated in the case where the channel \ac{pdf} is given by a \ac{gmm}.
One of our contributions is to provide a strong motivation to employ the \ac{gmm}-based estimator even if the channel \ac{pdf} is not a \ac{gmm}.
To this end, we show (in a proof and in numerical simulations) that even if the channel is not \ac{gmm} distributed,
the \ac{gmm}-based estimator converges to the optimal \ac{cme} as the number of \ac{gmm} components increases.

In detail, the following approach is taken in this paper.
First, channel samples are used to fit a \ac{gmm}.
Since \acp{gmm} can approximate any continuous \ac{pdf}~\cite{NgNgChMc20},
the fitted \ac{gmm} is a \ac{pdf} which approximates the unknown true channel \ac{pdf}.
Second, we analytically compute a \ac{cme} for channels distributed according to the \ac{gmm} \ac{pdf}.
Since the \ac{gmm} \ac{pdf} approximates the true channel \ac{pdf},
we ask whether the \ac{gmm}-based \ac{cme} approximates the true \ac{cme}.

A related work is~\cite{GuZh19}, where the authors assume that the channel is \ac{gmm}-distributed and study the available closed-form \ac{cme} for example in the asymptotic high \ac{snr} regime to derive pilot signals.
In this work, we do not assume that the channel is \ac{gmm}-distributed.
Instead, a \ac{gmm} is used as an approximation of the true channel \ac{pdf} and we analyze whether the corresponding \ac{cme} is an approximation of the true \ac{cme}.
This can be viewed as a study of the \ac{gmm} estimator in the high number of \ac{gmm} components regime.
A main contribution of our paper is to prove that as the number of components increases, the \ac{gmm} \ac{cme} converges to the optimal \ac{cme} if the observation matrix is invertible (cf.~\Cref{thm:main_result}).
For noninvertible observation matrices, we make a weaker statement.
Moreover, we analyze the \ac{gmm} estimator's computational complexity and show how the complexity can be reduced in different estimation scenarios.

We study the \ac{gmm} estimator in numerical simulations where we consider both \ac{mimo} and wideband channel estimation scenarios with both invertible and noninvertible observation matrices.
The considered channel data come from a 3GPP channel model \cite{3gpp} and from the QuaDRiGa channel simulator~\cite{QuaDRiGa1, QuaDRiGa2} so that they are not \ac{gmm}-distributed by construction.
The generated data represent a scenario where for example a base station covers a certain sector with users whose positions are drawn uniformly at random.
The obtained \ac{gmm} estimator is then suited for channel estimation in the whole scenario.
Already for a finite number of \ac{gmm} components,
the \ac{gmm} estimator shows a performance close to the optimal \ac{cme} in the numerical simulations.
We emphasize the \ac{gmm} estimator's broad applicability by comparing it to state-of-the-art algorithms from the literature.

The paper is structured as follows.
\Cref{sec:signal_model} introduces the signal model discussed throughout the paper as well as particular instances thereof which are used in numerical simulations.
\Cref{sec:gmm_literature} reviews \acp{gmm} and channel estimation literature which employs them.
The main part is \Cref{sec:main_part} where we investigate the \ac{gmm}-based \ac{cme} and study its convergence to the optimal \ac{cme} as well as its computational complexity.
\Cref{sec:sota,sec:numerical_results} present state-of-the-art channel estimation algorithms, channel models, and numerical simulations.

\emph{Notation:}
The supremum norm of a continuous function \( f: \R^N \to \R \) is given by \( \| f \|_\infty = \sup_{\mbx\in\R^N} |f(\mbx)| \),
and \( \| \mbx \| \) is the Euclidean norm of \( \mbx \in \C^N \).
A real- or complex-valued normal distribution with mean vector \( \mbmu \) and covariance matrix \( \mbC \) is denoted by \( \calN(\mbmu, \mbC) \) or \( \calN_{\C}(\mbmu, \mbC) \), respectively.
The vectorization (stacking columns) of a matrix \( \mbX \in \C^{m\times N} \) is written as \( \vect(\mbX) \in \C^{mN} \), and \( \mbA \otimes \mbB \in \C^{m_1m_2\times N_1N_2} \) is the Kronecker product of \( \mbA \in \C^{m_1\times N_1} \) and \( \mbB \in \C^{m_2\times N_2} \).

%%%%%%%%%%%%%%%%%%%%%%%%%%%%%%%%%%%%%%%%%%%%%%%%%%%%%%%%%%%%%%%%%%%%%%%%%%%%%%%
%%%%%%%%%%%%%%%%%%%%%%%%%%%%%%%%%%%%%%%%%%%%%%%%%%%%%%%%%%%%%%%%%%%%%%%%%%%%%%%
\section{Signal Models}\label{sec:signal_model}

We consider the generic signal model
\begin{equation}\label{eq:signal_model}
    \mby
    = \mbA \mbh + \mbn, \quad\quad \mbn \sim \calN_{\C}(\mbzero, \mbSigma)
\end{equation}
where \( \mbh \in \C^N \) is the channel, \( \mbA \in \C^{m\times N} \) is the observation matrix, and \( \mbn \in \C^m \) is additive white Gaussian noise.
The technical interpretation (e.g., number of antennas or pilots) of the dimensions \( m \) and \( N \) depends on the context.
Examples can be found in the following subsections.
The observation \( \mby \in \C^m \), the matrix \( \mbA \), the noise mean vector \( \mbzero \in \C^m \), and the noise covariance matrix \( \mbSigma \in \C^{m\times m} \) are given.
The goal of channel estimation is to recover \( \mbh \) from~\eqref{eq:signal_model}.

In this paper, we study a channel estimation algorithm which is designed using the given signal model as well as a data set of channel samples.
While the main part addresses the generic signal model~\eqref{eq:signal_model}, we consider the following three instances of it in numerical simulations.

%%%%%%%%%%%%%%%%%%%%%%%%%%%%%%%%%%%%%%%%%%%%%%%%%%%%%%%%%%%%%%%%%%%%%%%%%%%%%%%
%%%%%%%%%%%%%%%%%%%%%%%%%%%%%%%%%%%%%%%%%%%%%%%%%%%%%%%%%%%%%%%%%%%%%%%%%%%%%%%
\subsection{Single-Input Multiple-Output Signal Model}\label{sec:signal_model_simo}

The \ac{simo} signal model is for instance appropriate if a single-antenna mobile device transmits pilot signals to a base station with \( N \) antennas which receives
\begin{equation}\label{eq:signal_model_simo}
    \mby = \mbh + \mbn \in \C^N.
\end{equation}
This model is interesting for us because the observation matrix is the identity matrix and therefore invertible.
Further, the performance of the proposed channel estimator can be studied without having to take into account the difficulty of choosing a suitable observation matrix.

%%%%%%%%%%%%%%%%%%%%%%%%%%%%%%%%%%%%%%%%%%%%%%%%%%%%%%%%%%%%%%%%%%%%%%%%%%%%%%%
%%%%%%%%%%%%%%%%%%%%%%%%%%%%%%%%%%%%%%%%%%%%%%%%%%%%%%%%%%%%%%%%%%%%%%%%%%%%%%%
\subsection{Multiple-Input Multiple-Output Signal Model}\label{sec:signal_model_mimo}

If a mobile user with \( \Ntx \) antennas transmits \( N_p \) pilots to a base station with \( \Nrx \) antennas,
the receive signal \( \mbY \in \C^{\Nrx\times N_p} \) can be written as
\begin{equation}\label{eq:signal_model_mimo}
    \mbY = \mbH \mbP + \mbN
\end{equation}
where \( \mbH \in \C^{\Nrx\times \Ntx} \) is the channel, \( \mbP \in \C^{\Ntx\times N_p} \) is the pilot, and \( \mbN \in \C^{\Nrx \times N_p} \) is the noise matrix.
With the definitions \( \mbh = \vect(\mbH) \), \( \mby = \vect(\mbY) \), \( \mbn = \vect(\mbN) \), and \( \mbA = \mbP^\tp \otimes \mbI_{\Nrx} \), the \ac{mimo} signal model~\eqref{eq:signal_model_mimo} is an instance of~\eqref{eq:signal_model}.

%%%%%%%%%%%%%%%%%%%%%%%%%%%%%%%%%%%%%%%%%%%%%%%%%%%%%%%%%%%%%%%%%%%%%%%%%%%%%%%
%%%%%%%%%%%%%%%%%%%%%%%%%%%%%%%%%%%%%%%%%%%%%%%%%%%%%%%%%%%%%%%%%%%%%%%%%%%%%%%
\subsection{Wideband Signal Model}\label{sec:signal_model_wideband}

If we consider a \ac{siso} transmission in the spatial domain over a frequency-selective fading channel, \( \mbH\in \C^{N_c\times N_t} \) represents the time-frequency response of the channel for \( N_c \) subcarriers and \( N_t \) time slots.
When only \( N_p \) positions of the time-frequency response are occupied with pilot symbols,
then there is a \textit{selection matrix} \( \mbA \in \{0,1\}^{N_p\times N_c N_t} \) which represents the pilot positions.
This leads to the observations as described in~\eqref{eq:signal_model} with \( \mbh =\vect(\mbH)\in\C^{N_c N_t} \). 
Regarding the structure of the pilot positions, three different arrangements are commonly considered: block-, comb-, and lattice-type, cf. \cite{CoErPuBa02}.

%%%%%%%%%%%%%%%%%%%%%%%%%%%%%%%%%%%%%%%%%%%%%%%%%%%%%%%%%%%%%%%%%%%%%%%%%%%%%%%
%%%%%%%%%%%%%%%%%%%%%%%%%%%%%%%%%%%%%%%%%%%%%%%%%%%%%%%%%%%%%%%%%%%%%%%%%%%%%%%
\section{Gaussian Mixture Models in the Literature}\label{sec:gmm_literature}

In this section, we briefly explain \acp{gmm} and summarize channel estimation literature which makes use of \acp{gmm}.

%%%%%%%%%%%%%%%%%%%%%%%%%%%%%%%%%%%%%%%%%%%%%%%%%%%%%%%%%%%%%%%%%%%%%%%%%%%%%%%
%%%%%%%%%%%%%%%%%%%%%%%%%%%%%%%%%%%%%%%%%%%%%%%%%%%%%%%%%%%%%%%%%%%%%%%%%%%%%%%
\subsection{Gaussian Mixture Models}\label{sec:gmms}

A \ac{gmm} with \( K \) components is a \ac{pdf} of the form~\cite{bookBi06}
\begin{equation}\label{eq:gmm_of_h}
    f_{\mbh}^{(K)}(\mbh) = \sum_{k=1}^K p(k) \calN_{\C}(\mbh; \mbmu_k, \mbC_k)
\end{equation}
consisting of a weighted sum of \( K \) Gaussian \acp{pdf}.
The probabilities \( p(k) \) are called \textit{mixing coefficients}, and \( \mbmu_k \in \C^N \) and \( \mbC_k \in \C^{N\times N} \) denote the mean vector and covariance matrix of the \( k \)th \ac{gmm} component, respectively.
As explained in~\cite{bookBi06}, \acp{gmm} allow to calculate the \textit{responsibilities} \( p(k \mid \mbh) \) by evaluating Gaussian likelihoods:
\begin{equation}
    p(k \mid \mbh) = \frac{p(k) \calN_{\C}(\mbh; \mbmu_k, \mbC_k)}{\sum_{i=1}^K p(i) \calN_{\C}(\mbh; \mbmu_i, \mbC_i) }.
\end{equation}
This is an important property for our considerations.

Given data samples, an \ac{em} algorithm can be used to fit a \( K \)-components \ac{gmm}~\cite{bookBi06}.
The data-based fitting process determines the mixing coefficients, the mean vectors, and the covariance matrices.
A detailed introduction to \acp{gmm} and the corresponding well-known \ac{em} algorithm can, e.g., be found in~\cite{bookBi06}.

%%%%%%%%%%%%%%%%%%%%%%%%%%%%%%%%%%%%%%%%%%%%%%%%%%%%%%%%%%%%%%%%%%%%%%%%%%%%%%%
%%%%%%%%%%%%%%%%%%%%%%%%%%%%%%%%%%%%%%%%%%%%%%%%%%%%%%%%%%%%%%%%%%%%%%%%%%%%%%%
\subsection{Gaussian Mixture Models in Channel Estimation Literature}

In \cite{ViSc13}, channels are modeled as sparse vectors whose non-zero coefficients are \ac{gmm}-distributed.
A combination of \ac{em} and \ac{amp} is then introduced for channel estimation.
The algorithm simultaneously estimates the \ac{gmm} parameters.
Building on this work, \cite{WeJiWoChTi15} models the beam domain channels via \acp{gmm} in the context of uplink channel estimation with pilot contamination.
The authors of~\cite{SuWa19} then investigate the approach further including a new initialization technique for the algorithm.
In~\cite{WeHuDa21}, the beam domain channel is also assumed to be \ac{gmm}-distributed and a modification of learned \ac{amp} is proposed for sparse channel estimation.

The authors of~\cite{MuMiDe20} model temporal channel variations as \acp{gmm}, e.g., in order to predict channel states.
The authors of~\cite{ZhXuMeZhXu19} propose to improve channel estimation techniques by using \acp{gmm} as a better characterization of the noise in communications environments than it is given by the additive white Gaussian noise model.
In~\cite{NaRa18}, a \ac{gmm} prior is used for the unknown data symbols in semi-blind channel estimation.

\acp{gmm} are also employed for channel clustering tasks.
For example, \cite{LiZhMaZh20} use \acp{gmm} for channel multipath clustering.
In~\cite{LiZhTaTi19}, a \textit{power weighted \ac{gmm}} is proposed to increase the clustering performance.
Another variation of \acp{gmm}, called \textit{rotationally invariant \ac{gmm}}, can be found in~\cite{JiYuWaXuYu20}.

In \cite{GuZh18,GuZh19}, the true channel \ac{pdf} is assumed to be equal to a \ac{gmm} and the authors then investigate the corresponding \ac{cme} to optimize the pilot matrix.
To this end, the asymptotic high-\ac{snr} regime of the \ac{cme} is studied.
Further, an information-theoretic criterion for pilot optimization is introduced because the \ac{mse} of the estimator has no closed-form expression and is thus not suitable as optimization criterion~\cite{GuZh19}.

In this paper, we study the same \ac{gmm}-based estimator as the authors of \cite{GuZh18,GuZh19}.
However, we do not assume that the true channel \ac{pdf} is equal to a \ac{gmm}.
Instead, we take the \ac{gmm} as an approximation of the true channel \ac{pdf} and we ask whether the corresponding \ac{gmm}-based estimator is then an approximation of the true \ac{cme} (based on the true channel \ac{pdf}).
In this sense, we study the \ac{gmm}-based estimator's behavior in the high number of components regime and our work complements~\cite{GuZh19} by motivating the application of the estimator in a wider class of channel models.

%%%%%%%%%%%%%%%%%%%%%%%%%%%%%%%%%%%%%%%%%%%%%%%%%%%%%%%%%%%%%%%%%%%%%%%%%%%%%%%
%%%%%%%%%%%%%%%%%%%%%%%%%%%%%%%%%%%%%%%%%%%%%%%%%%%%%%%%%%%%%%%%%%%%%%%%%%%%%%%
\section{Main Part}\label{sec:main_part}

The \ac{mse}-optimal channel estimate for the model~\eqref{eq:signal_model} is given by the conditional expectation \( \expec[\mbh \mid \mby] \), cf., e.g.,~\cite{bookSc16}.
However, the true channel \ac{pdf} is generally not known and, therefore, \( \expec[\mbh \mid \mby] \) can generally not be calculated analytically.
Even if the true channel \ac{pdf} was known, the \ac{cme} \( \expec[\mbh \mid \mby] \) might still not have an analytic expression.
In this section, we investigate a \ac{gmm}-based \ac{cme} with closed-form expression and prove that it converges to the optimal \ac{cme} as the number of \ac{gmm} components is increased.
Further, we discuss its computational complexity and how the complexity can be reduced.

%%%%%%%%%%%%%%%%%%%%%%%%%%%%%%%%%%%%%%%%%%%%%%%%%%%%%%%%%%%%%%%%%%%%%%%%%%%%%%%
%%%%%%%%%%%%%%%%%%%%%%%%%%%%%%%%%%%%%%%%%%%%%%%%%%%%%%%%%%%%%%%%%%%%%%%%%%%%%%%
\subsection{Channel Estimator}\label{sec:gmm_channel_estimator}

\acp{gmm} are known to be able to approximate any continuous \ac{pdf} arbitrarily well~\cite{NgNgChMc20}.
In particular, if \( f_{\mbh} \) denotes the \ac{pdf} of the channel which is assumed to be continuous, then there exists a sequence \( ( f_{\mbh}^{(K)} )_{K=1}^\infty \) of \acp{gmm} which converges uniformly to \( f_{\mbh} \).
To define a \ac{gmm}-based estimator, let \( f_{\mbn} \) and \( f_{\mby} \) be the \acp{pdf} of the noise and the observation, respectively, and let us first observe the following:
\begin{equation}\label{eq:fh_given_y}
    f_{\mbh\mid\mby}(\mbh\mid\mby) = \frac{f_{\mby\mid\mbh}(\mby\mid\mbh)f_{\mbh}(\mbh)}{f_{\mby}(\mby)} = \frac{f_{\mbn}(\mby - \mbA\mbh)f_{\mbh}(\mbh)}{f_{\mby}(\mby)}.
\end{equation}
With this the optimal \ac{cme} can be expressed as
\begin{equation}\label{eq:conditional_mean}
    \hhat(\mby) = \expec[\mbh \mid \mby]
    = \int \mbh \frac{f_{\mbn}(\mby - \mbA\mbh)f_{\mbh}(\mbh)}{f_{\mby}(\mby)} d \mbh.
\end{equation}
For every \( K \in \N \), we now consider the model
\begin{equation}
    \mby^{(K)} = \mbA \mbh^{(K)} + \mbn
\end{equation}
where \( \mbh^{(K)} \) is distributed according to the \ac{gmm} \( f_{\mbh}^{(K)} \)
which has the form \eqref{eq:gmm_of_h}.
Since we have a sequence \( (f_{\mbh}^{(K)} )_{K=1}^\infty \) of \acp{gmm}, the parameters \(p(k)\), \(\mbmu_k\), and \(\mbC_k\) would also depend on the sequence index \(K\) but we omit it for readability.
Let \( f_{\mby}^{(K)} \) be the \ac{pdf} of \( \mby^{(K)} \).
We now define a \ac{gmm}-based estimator
\begin{equation}\label{eq:conditional_mean_K}
    \hhat^{(K)}(\mby) := \expec^{(K)}[\mbh^{(K)} \mid \mby]
    :=\int \mbh \frac{f_{\mbn}(\mby - \mbA\mbh)f_{\mbh}^{(K)}(\mbh)}{f_{\mby}^{(K)}(\mby)} d \mbh
\end{equation}
by replacing \( f_{\mbh} \) and \( f_{\mby} \) in~\eqref{eq:conditional_mean} with \( f_{\mbh}^{(K)} \) and \( f_{\mby}^{(K)} \), respectively,
because similar to~\eqref{eq:fh_given_y} we have
\begin{equation}\label{eq:fhk_given_y}
    f_{\mbh\mid\mby}^{(K)}(\mbh\mid\mby)
    = \frac{f_{\mbn}(\mby - \mbA\mbh)f_{\mbh}^{(K)}(\mbh)}{f_{\mby}^{(K)}(\mby)}.
\end{equation}

The law of total expectation allows us to write
\begin{equation}\label{eq:total_expectation}
    \hhat^{(K)}(\mby) = \sum_{k=1}^K p(k\mid \mby) \expec^{(K)}[\mbh^{(K)} \mid \mby, k]
\end{equation}
in order to introduce the \ac{gmm} mixing variable.
By definition of \acp{gmm}, conditioning on one of the mixing variables yields a Gaussian random vector.
That is, \( \mbh^{(K)} \mid k \sim \calN_{\C}(\mbmu_k, \mbC_k) \) is the \( k \)th Gaussian in the \ac{gmm} \( f_{\mbh}^{(K)} \), see also~\eqref{eq:gmm_of_h}.
Since \( \mbh^{(K)} \mid k \) is Gaussian, also \( \mbA \mbh^{(K)} \mid k \) and therefore \( \mby^{(K)} \mid k \) are Gaussian.
The conditional mean vector and conditional covariance matrix of \( \mby^{(K)} \mid k \) are \( \expec[\mby^{(K)} \mid k] = \mbA \meanhk \) and
\begin{equation}
    \expec[(\mby^{(K)} - \mbA \meanhk)(\mby^{(K)} - \mbA \meanhk)^\herm \mid k]
    = \mbA \covhk \mbA^\herm + \mbSigma,
\end{equation}
respectively.
The well-known \ac{lmmse} formula can now be used to compute
\begin{equation}\label{eq:lmmse_formula}
    \expec^{(K)}[\mbh^{(K)} \mid \mby, k] =
 \covhk \mbA^\herm (\mbA \covhk \mbA^\herm + \mbSigma)^{-1} (\mby - \mbA \meanhk) + \meanhk
\end{equation}
which can be plugged into~\eqref{eq:total_expectation}.

In order to calculate \( p(k \mid \mby) \) in~\eqref{eq:total_expectation}, we compute the \ac{pdf}
\begin{equation}\label{eq:gmm_y}
    f_{\mby}^{(K)}(\mby) = \sum_{k=1}^K p(k) \calN_{\C}(\mby; \mbA \meanhk, \mbA \covhk \mbA^\herm + \mbSigma),
\end{equation}
which is a \ac{gmm}.
\acp{gmm} allow to calculate the responsibilities by evaluating Gaussian likelihoods (cf. \Cref{sec:gmms}):
\begin{equation}\label{eq:responsibilities}
    p(k \mid \mby) = \frac{p(k) \calN_{\C}(\mby; \mbA \meanhk, \mbA \covhk \mbA^\herm + \mbSigma)}{\sum_{i=1}^K p(i) \calN_{\C}(\mby; \mbA \meanhi, \mbA \covhi \mbA^\herm + \mbSigma) }.
\end{equation}
Plugging this into~\eqref{eq:total_expectation} shows that as soon as the mixing coefficients \( p(k) \) as well as the means \( \meanhk \) and covariances \( \covhk \) are given, the estimator \( \hest \) can be computed in closed form by combining~\eqref{eq:total_expectation}, \eqref{eq:lmmse_formula}, and~\eqref{eq:responsibilities}, which results in~\eqref{eq:estimator_full}.
As discussed next in \Cref{sec:computational_complexity}, various quantities of the \ac{gmm} estimator can be precomputed at this point to save computational complexity:
the products involving the known observation matrix \( \mbA \) and means \( \mbmu_k \) and covariance matrices \(\mbC_k\) as well as the \ac{lmmse} filters including the computationally costly matrix inverse in \eqref{eq:estimator_full}.
To obtain the \ac{gmm} parameters, the channel \ac{pdf} \( f_{\mbh} \) needs to be approximated by fitting a \( K \)-components \ac{gmm} to given channel samples, cf.~\Cref{sec:gmms}.

\newcounter{MYtempeqncnt}
\begin{figure*}[!t]
% ensure that we have normalsize text
\normalsize
% Store the current equation number.
\setcounter{MYtempeqncnt}{\value{equation}}
% Set the equation number to one less than the one
% desired for the first equation here.
% The value here will have to be changed if equations
% are added or removed prior to the place these
% equations are referenced in the main text.
\setcounter{equation}{15}
\begin{equation}\label{eq:estimator_full}
    \hhat^{(K)}(\mby) = \sum_{k=1}^K
    \frac{p(k) \calN_{\C}(\mby; \mbA \meanhk, \mbA \covhk \mbA^\herm + \mbSigma)}{\sum_{i=1}^K p(i) \calN_{\C}(\mby; \mbA \meanhi, \mbA \covhi \mbA^\herm + \mbSigma) }
    \left(\covhk \mbA^\herm (\mbA \covhk \mbA^\herm + \mbSigma)^{-1} (\mby - \mbA \meanhk) + \meanhk\right)
\end{equation}
% Restore the current equation number.
\setcounter{MYtempeqncnt}{\value{equation}}
\setcounter{equation}{\value{MYtempeqncnt}}
% IEEE uses as a separator
\hrulefill
% The spacer can be tweaked to stop underfull vboxes.
\vspace*{4pt}
\end{figure*}

A possible application scenario of the \ac{gmm} estimator would be to use channel samples collected at, for example, the base station of a cellular radio system to construct a site-specific \ac{gmm} channel estimator.
In an initial (offline) training phase, the channel samples are used to fit a \( K \)-components \ac{gmm}.
Afterwards, (online) channel estimates are computed via~\eqref{eq:estimator_full}.
The formula~\eqref{eq:estimator_full} can also be found, e.g., in~\cite{GuZh19}.
One of the key differences to other work is that we provide a strong motivation to use~\eqref{eq:estimator_full} even if the channel \ac{pdf} \( f_{\mbh} \) is not a \ac{gmm}.
\Cref{alg:gmm_estimator} summarizes both the offline \ac{gmm} training and the online channel estimation phases.
The necessary number \( M \) of training data depends, e.g., on the number \( K \) of \ac{gmm} components.
We discuss this in more detail in \Cref{sec:numerical_results}.

\begin{algorithm}[!t]
    \caption{GMM Estimator}
    \label{alg:gmm_estimator}
    \begin{algorithmic}[1]
        \renewcommand{\algorithmicensure}{\textbf{Offline GMM Training Phase}}
        \ENSURE
        \REQUIRE training data \( \{ \mbh_m \}_{m=1}^M \), number of components \( K \)
        \STATE \( (\{ p(k) \}_{k=1}^K, \{ \mbmu_k \}_{k=1}^K, \{ \mbC_k \}_{k=1}^K) \leftarrow \mathrm{EM}(\{ \mbh_m \}_{m=1}^M, K) \)
        \COMMENT{an EM-algorithm computes all parameters of \( f_{\mbh}^{(K)} \)}
        \par\vskip.5\baselineskip\hrule height .4pt\par\vskip.5\baselineskip
        \renewcommand{\algorithmicensure}{\textbf{Online Channel Estimation Phase}}
        \ENSURE
        \REQUIRE observation \( \mby \), matrix \( \mbA \), noise matrix \( \mbSigma \)
        \STATE \( \hhat^{(K)} \leftarrow \mbzero \)
        \renewcommand{\algorithmicendfor}{\textbf{end}}
        \FOR {\( k = 1 \) to \( K \)}
        \STATE \( p(k \mid \mby) \leftarrow \frac{p(k) \calN_{\C}(\mby; \mbA \meanhk, \mbA \covhk \mbA^\herm + \mbSigma)}{\sum_{i=1}^K p(i) \calN_{\C}(\mby; \mbA \meanhi, \mbA \covhi \mbA^\herm + \mbSigma) } \)
        \STATE \( \tilde{\mbh} \leftarrow \covhk \mbA^\herm (\mbA \covhk \mbA^\herm + \mbSigma)^{-1} (\mby - \mbA \meanhk) + \meanhk \)
        \STATE \( \hhat^{(K)} \leftarrow \hhat^{(K)} + p(k \mid \mby) \tilde{\mbh} \)
        \ENDFOR\,
        \COMMENT{many quantities in the loop can be precomputed}
        \RETURN \( \hhat^{(K)} \)
        \COMMENT{estimated channel, see~\eqref{eq:estimator_full}}
    \end{algorithmic}
\end{algorithm}

%%%%%%%%%%%%%%%%%%%%%%%%%%%%%%%%%%%%%%%%%%%%%%%%%%%%%%%%%%%%%%%%%%%%%%%%%%%%%%%
%%%%%%%%%%%%%%%%%%%%%%%%%%%%%%%%%%%%%%%%%%%%%%%%%%%%%%%%%%%%%%%%%%%%%%%%%%%%%%%
\subsection{Computational Complexity}\label{sec:computational_complexity}

To compute \( \hest(\mby) \) in~\eqref{eq:estimator_full}, \( K \) responsibilities \( p(k \mid \mby) \)~\eqref{eq:responsibilities} and \( K \) \ac{lmmse} formulas~\eqref{eq:lmmse_formula} need to be evaluated.
Since both the matrix \( \mbA \) and the \ac{gmm} covariance matrices \( \covhk \) do not change between observations,
the inverse in~\eqref{eq:lmmse_formula} can be precomputed offline for various \acp{snr}.
Thus, evaluating~\eqref{eq:lmmse_formula} online is dominated by matrix-vector multiplications and has a complexity of \( \calO(mN) \).
The responsibilities are calculated by evaluating Gaussian densities, as can be seen from~\eqref{eq:responsibilities}.
A Gaussian density with mean \( \mbmu \in \C^{m} \) and covariance matrix \( \mbC \in \C^{m\times m} \) can be written as
\begin{equation}\label{eq:gaussian_density}
    \calN_{\C}(\mbx; \mbmu, \mbC) = \frac{\exp(-(\mbx - \mbmu)^\herm \mbC^{-1} (\mbx - \mbmu))}{\pi^{m} \det(\mbC)}.
\end{equation}
Again, since the \ac{gmm} covariance matrices and mean vectors do not change between observations, the inverse and the determinant of the densities in~\eqref{eq:responsibilities} can be precomputed offline.
Thus, the online evaluation is again dominated by matrix-vector multiplications and has a complexity of \( \calO(m^2) \).
The resulting overall complexity of computing \( \hest \) is \( \calO(K m N) \).

In some cases, as demonstrated in the following subsections, the computational complexity can be reduced by constraining the \ac{gmm} covariance matrices \( \covhk \) such that corresponding matrix-vector multiplications are accelerated.
In other cases, the number of \ac{gmm} parameters might be reduced by introducing covariance matrix constraints which can enhance the convergence of the \ac{em} algorithm, improve the resulting estimation performance, and reduce the required amount of channel samples.
Particular choices for constraints can come from scenario-specific insights.
We demonstrate the feasibility of the following two constraint examples in \Cref{sec:numerical_results}.

%%%%%%%%%%%%%%%%%%%%%%%%%%%%%%%%%%%%%%%%%%%%%%%%%%%%%%%%%%%%%%%%%%%%%%%%%%%%%%%
%%%%%%%%%%%%%%%%%%%%%%%%%%%%%%%%%%%%%%%%%%%%%%%%%%%%%%%%%%%%%%%%%%%%%%%%%%%%%%%
\subsubsection{Circulant covariance matrices}\label{sec:circulant_covs}

A first example is a scenario, where the base station employs a \ac{ula} and where the channel covariance matrix therefore is Toeplitz structured.
For large numbers of antennas, a Toeplitz matrix is well approximated by a circulant matrix~\cite{Gr06}.
Any circulant matrix \( \mbC \in \C^{N\times N} \) has an eigendecomposition of the form \( \mbC = \mbF^\herm \diag(\mbc) \mbF \)
where \( \mbF \in \C^{N\times N} \) is the \ac{dft} matrix and where \( \mbc \in \C^N \).
Consequently, thanks to fast Fourier transforms, matrix-vector multiplications involving circulant matrices can be performed in \( \calO(N \log(N)) \) time.
For a large number of antennas, we therefore have a motivation to use circulant covariance matrices \( \covhk = \mbF^\herm \diag(\mbc_k) \mbF \) in the \ac{gmm}.

This is particularly interesting for a signal model where \( \mbA = \mbI \) and \( \mbSigma = \sigma^2 \mbI = \sigma^2 \mbF \mbF^\herm \).
In this case, the \ac{lmmse} formula~\eqref{eq:lmmse_formula} simplifies to
\begin{equation}\label{eq:nlogn_lmmse_formula}
    \expec[\mbh \mid \mby, k] = \mbF^\herm \diag(\mbd_k) \mbF (\mby - \meanhk) + \meanhk
\end{equation}
where the \( i \)th entry of the vector \( \mbd_k \) is given by \( [\mbd_k]_i = \frac{[\mbc_k]_i}{[\mbc_k]_i + \sigma^2} \),
such that~\eqref{eq:nlogn_lmmse_formula} can be calculated in \( \calO(N \log(N)) \) time.
With a circulant \( \mbC = \mbF^\herm \diag(\mbc) \mbF \), \eqref{eq:gaussian_density} reads as
\begin{multline}\label{eq:nlogn_gaussian_density}
    \calN_{\C}(\mbx; \mbmu, \mbF^\herm \diag(\mbc) \mbF) = \\ \frac{ \exp(-(\mbF(\mbx - \mbmu))^\herm \diag(\mbc)^{-1} \mbF (\mbx - \mbmu))}{\pi^n \prod_{i=1}^N [\mbc]_i}.
\end{multline}
A first observation is that this can also be evaluated in \( \calO(N \log(N)) \) time such that computing channel estimates \( \hest \) has a complexity of \( \calO(K N \log(N)) \) if circulant covariance matrices are used in the \ac{gmm}.
A second observation is that the Gaussian density~\eqref{eq:nlogn_gaussian_density} has a significantly reduced number of parameters: \( N + N \) (mean vector \( \mbmu \) and covariance vector \( \mbc \)) in contrast to \( N + \frac{N(N+1)}{2} \) (mean vector \( \mbmu \) and covariance matrix \( \mbC \)) in~\eqref{eq:gaussian_density}.
This simplifies the \ac{em} algorithm iterations of the \ac{gmm} fitting process and reduces the number of required training channel samples.
For this latter reason, even if \( \mbA \neq \mbI \), one might be interested in employing a \ac{gmm} with circulant covariance matrices.
The relationship between the number of channel samples and the \ac{em} algorithm's performance is demonstrated in \Cref{sec:numerical_results}.
In an implementation, instead of constraining the covariance matrices to be circulant,
all channel samples can be Fourier transformed as \( \tilde{\mbh} = \mbF \mbh \) and then the \ac{gmm}'s covariance matrices can be constrained to be diagonal matrices due to the relation \( \mbC = \mbF^\herm \diag(\mbc) \mbF \).

%%%%%%%%%%%%%%%%%%%%%%%%%%%%%%%%%%%%%%%%%%%%%%%%%%%%%%%%%%%%%%%%%%%%%%%%%%%%%%%
%%%%%%%%%%%%%%%%%%%%%%%%%%%%%%%%%%%%%%%%%%%%%%%%%%%%%%%%%%%%%%%%%%%%%%%%%%%%%%%
\subsubsection{Kronecker covariance matrices}\label{sec:kronecker_covs}

Another example where complexity can be reduced is the \ac{mimo} signal model from \Cref{sec:signal_model_mimo}.
A well-known assumption for spatial correlation scenarios is that the scattering in the vicinity of the transmitter and of the receiver are independent of each other, cf. \cite{KeScPeMoFr02}.
In this case, every channel covariance matrix \( \mbC \) can be decomposed into the Kronecker product of a transmit and receive side spatial covariance matrix: \( \mbC = \mbC_{\text{tx}} \otimes \mbC_{\text{rx}} \).
Here, we have a motivation to use a \ac{gmm} with Kronecker product covariance matrices \( \covhk = \mbC_{\text{tx},k} \otimes \mbC_{\text{rx},k} \).

To this end, instead of fitting a single \ac{gmm} using the vectorized channel data of dimension \( N = \Ntx \Nrx \),
one can fit two independent transmit and receive side \acp{gmm} of dimensions \( \Ntx \) and \( \Nrx \), respectively.
This not only results in lower offline training complexity and in the ability to parallelize, but also in a smaller number of training channel samples needed because the respective \acp{gmm} have much fewer parameters.
The training channel samples for these low-dimensional \acp{gmm} are obtained by taking the rows (columns) of the available channel matrices independently in order to fit the transmit (receive) side \ac{gmm}.
In order to then obtain the full-size covariance matrices \( \mbC_k \),
all combinatorial Kronecker products of transmit and receive side covariance matrices \( \mbC_{\text{tx},i} \) and \( \mbC_{\text{rx},j} \) are computed.
The details are described in the numerical simulations section.

In this example, plugging the Kronecker decomposition \( \covhk = \mbC_{\text{tx},k} \otimes \mbC_{\text{rx},k} \) into the \ac{lmmse} formula~\eqref{eq:lmmse_formula} does not lead to an expression that simplifies to a Kronecker product.
This is because the inverse in~\eqref{eq:lmmse_formula} can generally not be written in terms of a Kronecker product and, thus, full matrix-vector products are necessary.
However, \cite{SiMeWrRu10} explains how~\eqref{eq:lmmse_formula} can be approximated by means of a Kronecker product in the described setting, which might be interesting if computational complexity of~\eqref{eq:lmmse_formula} is an issue.
Nonetheless, even if the online computational complexity is not affected, Kronecker \ac{gmm} covariance matrices can still be beneficial, for instance, if the number of available channel samples is small.
We demonstrate this case in the numerical simulations section.

%%%%%%%%%%%%%%%%%%%%%%%%%%%%%%%%%%%%%%%%%%%%%%%%%%%%%%%%%%%%%%%%%%%%%%%%%%%%%%%
%%%%%%%%%%%%%%%%%%%%%%%%%%%%%%%%%%%%%%%%%%%%%%%%%%%%%%%%%%%%%%%%%%%%%%%%%%%%%%%
\subsection{Convergence of the Estimator}\label{sec:convergence_theorem}

This subsection uses a universal approximation result of~\cite{NgNgChMc20} to show that if \( f_{\mbh} \) is continuous, then the \ac{gmm}-based estimator \( \hest \) in~\eqref{eq:conditional_mean_K} can approximate the optimal \ac{cme} \( \hhat(\mby) \) in~\eqref{eq:conditional_mean} arbitrarily well as the number \( K \) of \ac{gmm} components increases.
The intuition is that if a sequence of \acp{pdf} \( f_{\mbh}^{(K)} \), which is used in~\eqref{eq:conditional_mean_K}, converges to the channel \ac{pdf} \( f_{\mbh} \),
we can conjecture that also \( f_{\mby}^{(K)} \) from~\eqref{eq:gmm_y} converges to \( f_{\mby} \) and that then \( \hhat^{(K)}(\mby) \) in~\eqref{eq:conditional_mean_K} converges to the \ac{cme} \( \hhat(\mby) \) in~\eqref{eq:conditional_mean}.

Recall that the \ac{pdf} of a complex random vector can be expressed by means of a joint \ac{pdf} of its real and imaginary parts.
Therefore, this subsection considers real-valued quantities only and the results generalize to the complex-valued setting by considering stacked real and imaginary parts.

To state the main result, we adopt some definitions from~\cite{NgNgChMc20}.
Let \( \calC = \{ f: \R^N \to \R: f \geq 0, \int f(\mbx) d\mbx = 1, f \text{ is contin.} \} \) denote the set of all continuous \acp{pdf}.
Further, let \( g \) denote the standard Gaussian density and define the class of \( K \)-component location-scale finite Gaussian mixtures as
\begin{equation}
    \calM_K = \left\{ h: h(\mbx) = \sum_{k=1}^K c_k \frac{1}{\sigma_k^N} g\left( \frac{\mbx - \mbmu_k}{\sigma_k} \right) \right\}
\end{equation}
with \( \mbmu_k \in \R^N, \sigma_k > 0, c_k \geq 0 \) for all \( k \in \{1, \dots, K\} \) and \( \sum_{k=1}^K c_k = 1 \).
Then, any continuous \ac{pdf} can be approximated arbitrarily well by means of \acp{gmm}, as \cite[Theorem 5]{NgNgChMc20} states:
\begin{theorem}\label{thm:nguyens}
    Let
    \(
        \calC_0 = \{ f \in \calC: \forall \varepsilon > 0, \exists \text{ a compact } \calK \subset \R^N \text{ such that } \sup_{\mbx \in \R^N \setminus \calK} | f(\mbx) | < \varepsilon \}
    \)
    denote the set of all continuous \acp{pdf} which vanish at infinity.
    For any \( f \in \calC_0 \), there exists a sequence \( ( f^{(K)} )_{K=1}^\infty \) with \( f^{(K)} \in \calM_K \) with
    \begin{equation}
        \lim_{K\to\infty} \|f - f^{(K)}\|_\infty = 0.
    \end{equation}
\end{theorem}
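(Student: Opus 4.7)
My plan is to prove the approximation in two stages: first mollification, then discretization. The overall strategy is to show that any \(f \in \calC_0\) can be arbitrarily well approximated in sup-norm by a convolution \(f * g_\sigma\), where \(g_\sigma\) is an isotropic Gaussian kernel of bandwidth \(\sigma\), and then to approximate this convolution by a finite Riemann sum which, after normalization, will lie in \(\calM_K\). The location-scale form of the mixture arises naturally because the kernel \(g_\sigma(\cdot - \mbmu_k)\) already has the shape \(\sigma^{-N} g((\cdot - \mbmu_k)/\sigma)\) required by the definition of \(\calM_K\).

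For the first stage, I would fix \(\varepsilon > 0\) and show \(\|f * g_\sigma - f\|_\infty < \varepsilon/2\) for sufficiently small \(\sigma\). The crucial observation is that a continuous \(f\) which vanishes at infinity is bounded and \emph{uniformly} continuous on \(\R^N\): the vanishing-at-infinity property lets us fix a compact \(\calK\) outside of which \(f\) is uniformly small, and on \(\calK\) continuity upgrades to uniform continuity by compactness. Standard mollifier estimates then yield uniform convergence \(f * g_\sigma \to f\) as \(\sigma \downarrow 0\) on all of \(\R^N\).

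For the second stage, I fix \(\sigma\) from the first step and discretize. I choose a compact box \(\calK' \supset \calK\) large enough that both \(f\) and \(f * g_\sigma\) are smaller than \(\varepsilon/(8 \cdot \text{vol})\) outside \(\calK'\), then tile \(\calK'\) by a grid \(\{\mbmu_k\}_{k=1}^{K}\) with volume elements \(\Delta_k\). Defining tentative weights \(\tilde{c}_k = f(\mbmu_k)\Delta_k\), the Riemann sum \(\sum_k \tilde{c}_k g_\sigma(\mbx - \mbmu_k)\) approximates \(\int_{\calK'} f(\mby) g_\sigma(\mbx - \mby)\, d\mby\) uniformly in \(\mbx\), using uniform continuity of the integrand \(\mby \mapsto f(\mby) g_\sigma(\mbx - \mby)\) on the compact \(\calK'\) together with a uniform bound on \(g_\sigma\) and its gradient. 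Refining the grid makes this discretization error at most \(\varepsilon/4\); the omitted tail \(\R^N \setminus \calK'\) contributes at most \(\varepsilon/4\) by the vanishing-at-infinity choice of \(\calK'\). Finally I set \(c_k = \tilde{c}_k / \sum_j \tilde{c}_j\) to obtain an element of \(\calM_K\); because \(\sum_j \tilde{c}_j = 1 + O(\varepsilon)\) (being a Riemann sum for \(\int f = 1\) up to the tail), the renormalization changes the sup-norm only by \(O(\varepsilon \|g_\sigma\|_\infty)\), which can be absorbed by shrinking constants. Combining the two stages via the triangle inequality gives \(\|f - f^{(K)}\|_\infty < \varepsilon\) for \(K\) sufficiently large.

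The main obstacle is making every estimate \emph{uniform on all of \(\R^N\)}, since both the mollification error and the Riemann-sum error are classically controlled only on compacta. The argument leans crucially on \(\calC_0\) (rather than merely \(\calC\)): without the vanishing-at-infinity hypothesis, the tail contribution to the discretization error cannot be controlled and the renormalization step can introduce a non-negligible uniform perturbation. The renormalization bookkeeping — showing that dividing by \(\sum_j \tilde{c}_j\) preserves the \(\varepsilon\)-closeness in sup-norm — is the most delicate piece of the calculation, but it reduces to a straightforward estimate once the tail has been pushed below \(\varepsilon\).
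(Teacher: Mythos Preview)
The paper does not actually prove this statement: it is quoted as \cite[Theorem~5]{NgNgChMc20} and then used as a black-box input to the proof of \Cref{thm:main_result}. There is therefore no in-paper argument to compare your proposal against.

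Your two-stage mollify-then-discretize strategy is the standard route to results of this kind and is essentially correct. A few points are worth tightening if you write it out in full. First, in the Riemann-sum step the modulus of continuity of \(\mby \mapsto f(\mby)g_\sigma(\mbx-\mby)\) depends on \(\mbx\); the uniform-in-\(\mbx\) control you need comes cleanly from the global Lipschitz constant of \(g_\sigma\) together with the uniform continuity of \(f\), which you mention but should make explicit. Second, the renormalization error is of order \(|\,\sum_j \tilde c_j - 1\,|\cdot\|g_\sigma\|_\infty\), and \(\|g_\sigma\|_\infty=\sigma^{-N}\|g\|_\infty\) was fixed in stage one as a function of \(\varepsilon\); so the order of choices matters---fix \(\sigma\) first, then choose \(\calK'\) and the mesh fine enough relative to the now-fixed constant \(\sigma^{-N}\). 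Third, your construction produces a mixture in \(\calM_K\) only for the particular grid sizes \(K\) that arise; to get a sequence indexed by every \(K\in\N\) you use \(\calM_K\subset\calM_{K+1}\) (pad with a zero-weight component, permitted since \(c_k\ge 0\)), which is immediate but should be said.
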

Note that since a \ac{pdf} is integrable, it always vanishes at infinity such that this is not a constraint for our considerations.
As mentioned, we now work with real quantities \( \mby = \mbA \mbh + \mbn \) where \( \mbn \) is a real Gaussian random vector with mean zero and covariance matrix \( \mbSigma \in \R^{N\times N} \) whose \ac{pdf} we denote by \( f_{\mbn} \in \calC_0 \).
The \ac{pdf} of \( \mbh \) is \( f_{\mbh} \) and the \ac{pdf} of \( \mby \) is \( f_{\mby} \).
The following theorem is proved in Appendix~\ref{sec:proof_of_main_result}.

\begin{theorem}\label{thm:main_result}
    With the notation defined above, let \( \mbA \in \R^{N\times N} \) be invertible and let \( f_{\mbh} \in \calC_0 \) be arbitrary.
    Let \( ( f_{\mbh}^{(K)} )_{K=1}^\infty \) be a sequence of \acp{pdf} in \( \calC_0 \) which converges uniformly to \( f_{\mbh} \).
    Then, the estimator
    \begin{equation}
        \hhat^{(K)}(\mby) = \expec^{(K)}[\mbh^{(K)}\mid\mby] = \int \mbh \frac{f_{\Brv n}(\mby - \mbA\mbh)f_{\Brv h}^{(K)}(\mbh)}{f_{\Brv y}^{(K)}(\mby)} d \mbh
    \end{equation}
    approximates the \ac{cme}
    \begin{equation}
        \hhat(\mby) = \expec[\mbh\mid\mby] = \int \mbh \frac{f_{\Brv n}(\mby - \mbA\mbh)f_{\Brv h}(\mbh)}{f_{\Brv y}(\mby)} d \mbh
    \end{equation}
    in the sense that for any radius \( r > 0 \),
    \begin{equation}\label{eq:convergence_estimator}
        \lim_{K\to\infty} \| \hhat(\mby) - \hhat^{(K)}(\mby) \| = 0
    \end{equation}
    holds uniformly for all \( \mby \) in the ball \( \calB_r = \{ \mby \in \R^{N} : \| \mby \| \leq r \} \).
    Thus, in particular, by finding a suitable \( r > 0 \), \eqref{eq:convergence_estimator} can be seen to hold for any given \( \mby \in \R^N \).
\end{theorem}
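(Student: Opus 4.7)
The plan is to rewrite $\hhat^{(K)}(\mby) = \calN^{(K)}(\mby)/\calD^{(K)}(\mby)$, where $\calD^{(K)} := f_{\mby}^{(K)}$ and $\calN^{(K)}(\mby) := \int \mbh\, f_{\mbn}(\mby-\mbA\mbh)\, f_{\mbh}^{(K)}(\mbh)\,d\mbh$, and analogously define $\calN, \calD$ with $f_{\mbh}$ in place of $f_{\mbh}^{(K)}$. I will then show that both $\calN^{(K)} \to \calN$ and $\calD^{(K)} \to \calD$ uniformly on $\calB_r$, and that $\calD$ is bounded away from zero on $\calB_r$. Combining these via a standard quotient decomposition yields~\eqref{eq:convergence_estimator}.

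For the denominator, factoring $\|f_{\mbh}^{(K)} - f_{\mbh}\|_\infty$ out of the integral and substituting $\mbu = \mbA\mbh$ (which is where invertibility of $\mbA$ first enters) reduces the remaining integral to $|\det\mbA|^{-1}\int f_{\mbn}(\mby - \mbu)\,d\mbu = |\det\mbA|^{-1}$, so $|\calD^{(K)}(\mby) - \calD(\mby)|$ vanishes uniformly on all of $\R^N$ by hypothesis. For the numerator, the same sup-norm bound leaves an extra $\|\mbh\|$ inside; after the substitution $\mbu = \mbA\mbh$ followed by $\mbv = \mby - \mbu$, this factor becomes $\|\mbA^{-1}(\mby - \mbv)\|$, and the triangle inequality together with the finiteness of the first moment of the Gaussian $f_{\mbn}$ gives a bound of the form $\|f_{\mbh}^{(K)} - f_{\mbh}\|_\infty \cdot C(r, \mbA)$ for $\mby \in \calB_r$, yielding uniform convergence on the ball.

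For the denominator lower bound, $\calD = f_{\mby}$ is strictly positive everywhere because $f_{\mbn} > 0$ pointwise and $f_{\mbh}$ is a probability density, and it is continuous on $\R^N$ by dominated convergence (majorant $\|f_{\mbn}\|_\infty f_{\mbh}$). On the compact ball $\calB_r$ it therefore attains a positive minimum $\delta_r > 0$. For $K$ large enough, uniform convergence of $\calD^{(K)}$ forces $\calD^{(K)}(\mby) \geq \delta_r/2$ for all $\mby \in \calB_r$.

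Finally, I write $\hhat^{(K)} - \hhat = (\calN^{(K)} - \calN)/\calD^{(K)} + \calN\,(\calD - \calD^{(K)})/(\calD\,\calD^{(K)})$. Since $\calN$ is continuous on $\calB_r$ (again by dominated convergence, using the Gaussian's finite first moment), $\|\calN(\mby)\|$ is bounded there, so both summands tend to zero uniformly in $\mby \in \calB_r$ by the three estimates above. The main technical subtlety is controlling the unbounded factor $\mbh$ appearing in $\calN^{(K)}$: invertibility of $\mbA$ is exactly what allows one to transfer $\mbh$ to the Gaussian side as $\mbA^{-1}(\mby - \mbv)$, whose norm is integrable against $f_{\mbn}$ with a bound uniform in $\mby \in \calB_r$; without invertibility this transfer fails, which is consistent with the theorem only giving a weaker statement in the noninvertible case.
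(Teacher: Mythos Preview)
Your proposal is correct and follows essentially the same route as the paper's proof: the identical key ingredients are the uniform convergence of $f_{\mby}^{(K)}\to f_{\mby}$ via the substitution $\mbu=\mbA\mbh$, the bound on $\int \|\mbh\|\, f_{\mbn}(\mby-\mbA\mbh)\,d\mbh$ over $\calB_r$ (the paper isolates this as a separate lemma, you obtain it inline via $\mbv=\mby-\mbu$), and the positive minimum of $f_{\mby}$ on the compact ball to control the denominator. The only cosmetic difference is where the quotient is split: the paper pulls $\sup_{\mbh}\bigl|f_{\mbh}(\mbh)/f_{\mby}(\mby)-f_{\mbh}^{(K)}(\mbh)/f_{\mby}^{(K)}(\mby)\bigr|$ outside the integral and then decomposes that fraction, whereas you integrate first and decompose $\calN/\calD-\calN^{(K)}/\calD^{(K)}$ directly; both are the same algebraic identity applied at different levels and lead to the same estimates.
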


%%%%%%%%%%%%%%%%%%%%%%%%%%%%%%%%%%%%%%%%%%%%%%%%%%%%%%%%%%%%%%%%%%%%%%%%%%%%%%%
%%%%%%%%%%%%%%%%%%%%%%%%%%%%%%%%%%%%%%%%%%%%%%%%%%%%%%%%%%%%%%%%%%%%%%%%%%%%%%%
\subsection{Discussion of Theorem 2}\label{sec:discussion_theorem2}

Both estimators \( \hhat^{(K)} \) and \( \hhat \) are functions which map the current observation \( \mby \) onto corresponding channel estimates \( \hhat^{(K)}(\mby) \) and \( \hhat(\mby) \).
\Cref{thm:main_result} proves the pointwise convergence of the function sequence \( ( \hhat^{(K)} )_{K=1}^{\infty} \) to the function \( \hhat \).
In detail, for any observation \( \mby \), the sequence \( ( \hhat^{(K)}(\mby) )_{K=1}^{\infty} \) of channel estimates converges in Euclidean norm to the optimal channel estimate \( \hhat(\mby) \).
To our knowledge, the pointwise convergence of the estimators has not been investigated yet.
The work in~\cite{Go94} can be considered to be most related to our result.
Therein, the author assumes that the random vectors \( (\mbh^{(K)}, \mby^{(K)}) \)
converge in distribution to the random vectors \( (\mbh, \mby) \) and the question
is whether the \textit{random vectors} \( \expec[\mbh^{(K)} \mid \mby^{(K)}] \)
converge in distribution to the \textit{random vector} \( \expec[\mbh \mid \mby] \).
Here, the condition is still considered to be a random vector
whereas we assume to condition on the current realization, which is given by the observation at the base station.
Thus, \cite{Go94} studies a sequence of random vectors and we study a sequence of functions.
Further, \cite{Go94} studies the convergence in distribution and we study the pointwise convergence.
Therefore, the result in \Cref{thm:main_result} is not a consequence of~\cite{Go94}.

Next, we discuss the implications of the fact that~\eqref{eq:convergence_estimator} holds uniformly for all \( \mby \) in a ball \( \calB_r \).
Let \( \calB_r \) be given.
If we want the error \( \| \hhat(\mby) - \hhat^{(K)}(\mby) \| \) to be smaller than a given threshold \( \varepsilon_{\text{thr}} > 0 \), then according to \Cref{thm:main_result}, we can find a \( K_r \in \N \) such that \( \| \hhat(\mby) - \hhat^{(K)}(\mby) \| \leq \varepsilon_{\text{thr}} \) holds for all \( \mby \in \calB_r \) and \( K \geq K_r \).
This does not mean that the error is always larger than \( \varepsilon_{\text{thr}} \) for \( \mby \notin \calB_r \).
However, it can be the case, that for certain \( \mby \notin \calB_r \) the number \( K \) of components needs to be strictly larger than \( K_r \) in order for \( \| \hhat(\mby) - \hhat^{(K)}(\mby) \| \) to fall below the threshold.

A requirement of \Cref{thm:main_result} is a sequence of \acp{pdf} which converges uniformly to \( f_{\mbh} \).
By \Cref{thm:nguyens}, such a sequence always exists if we consider \acp{gmm}.
However, as argued in~\cite{NgNgChMc20}, there exist other mixtures with universal approximation properties as well.
It is an interesting question whether~\eqref{eq:conditional_mean_K} can be computed in closed form for other mixture models and to see if they for example need fewer components for a satisfying approximation and channel estimation.

\Cref{thm:main_result} requires \( \mbA \) to be invertible.
Unfortunately, the proof of \Cref{thm:main_result} cannot be conducted as presented if \( \mbA \) is not invertible, which is for example the case when we consider a wide matrix with more columns than rows.
There are multiple challenges involved.
First, the proof makes use of \Cref{lem:bound_norm} which shows that \( \int \| \mbh \| f_{\mbn}(\mby - \mbA \mbh) d\mbh \) is finite for any \( \mby \).
This integral is generally not finite if \( \mbA \) is not invertible (see Appendix~\ref{sec:noninvertible_A}).
Second, for invertible \( \mbA \), we could directly show that the sequence of \acp{pdf} corresponding to \( \mbA \mbh^{(K)} \) converges uniformly to the \ac{pdf} of \( \mbA \mbh \).
This will likely not hold for noninvertible \( \mbA \) (see Appendix~\ref{sec:no_uniform_convergence}).

While a strong statement about the convergence of the estimators does not seem possible with the presented means if \( \mbA \) is not invertible,
we can make the following observation.
Since \( f_{\mbh}^{(K)} \) converges uniformly to \( f_{\mbh} \),
it converges in particular pointwise.
By Scheffe's lemma (e.g.,~\cite{bookRe03}), it follows that the random vectors \( \mbh^{(K)} \) converge to \( \mbh \) in distribution.
We also have \( (\mbh^{(K)}, \mbn) \to (\mbh, \mbn) \) in distribution.
If we define the continuous mapping
\begin{equation}
    s: \mathbb{R}^{N+m} \to \mathbb{R}^{N+m}, (\mbh, \mbn) \mapsto (\mbh, \mbA \mbh + \mbn)
\end{equation}
then, the continuous mapping theorem (e.g.,~\cite{bookKl08}) implies the convergence of \( s(\mbh^{(K)}, \mbn) = (\mbh^{(K)}, \mby^{(K)}) \) to
\( s(\mbh, \mbn) = (\mbh, \mby) \) in distribution.
Given the convergence in distribution of a sequence \( (\mbh^{(K)}, \mby^{(K)}) \) to \( (\mbh, \mby) \),
the author of~\cite{Go94} investigates conditions which ensure the convergence of the corresponding conditional expectations \( \expec[\mbh^{(K)} \mid \mby^{(K)}] \) to \( \expec[\mbh \mid \mby] \).
The main result depends without limitation on the distribution of \( f_{\mbh} \) which is not assumed to be known in our setting.
Reciting the main result is beyond the scope of the current paper and we refer the interested reader to~\cite{Go94}.

%%%%%%%%%%%%%%%%%%%%%%%%%%%%%%%%%%%%%%%%%%%%%%%%%%%%%%%%%%%%%%%%%%%%%%%%%%%%%%%
%%%%%%%%%%%%%%%%%%%%%%%%%%%%%%%%%%%%%%%%%%%%%%%%%%%%%%%%%%%%%%%%%%%%%%%%%%%%%%%
\section{Channel Models and Related Channel Estimators}\label{sec:sota}

Before we turn to numerical simulations, we introduce the considered channel models and discuss other channel estimation algorithms which we use for comparison.
At this point, it should be noted that the proposed approach does not rely on estimating a (link-based) covariance matrix based on pilot symbols.
Therefore, we do not show comparisons with such approaches as this lies in a different field of applications.
To generate channel samples (for training and testing purposes), we define a scenario like for example a base station which covers a certain \( 120^\circ \) sector.
Afterwards, we choose user positions uniformly at random within the scenario and retrieve their corresponding channels.
The so-obtained set of channel samples can then be used to find estimators for the whole scenario.
These estimators (both the \ac{gmm} estimator as well as all estimators introduced in the following) are trained/computed/defined once and then tested on the whole scenario without further modification.

%%%%%%%%%%%%%%%%%%%%%%%%%%%%%%%%%%%%%%%%%%%%%%%%%%%%%%%%%%%%%%%%%%%%%%%%%%%%%%%
%%%%%%%%%%%%%%%%%%%%%%%%%%%%%%%%%%%%%%%%%%%%%%%%%%%%%%%%%%%%%%%%%%%%%%%%%%%%%%%
\subsection{Channel Models}\label{sec:channel_models}

%%%%%%%%%%%%%%%%%%%%%%%%%%%%%%%%%%%%%%%%%%%%%%%%%%%%%%%%%%%%%%%%%%%%%%%%%%%%%%%
%%%%%%%%%%%%%%%%%%%%%%%%%%%%%%%%%%%%%%%%%%%%%%%%%%%%%%%%%%%%%%%%%%%%%%%%%%%%%%%
\subsubsection{3GPP}\label{sec:3gpp}

We work with a spatial channel model~\cite{3gpp,NeWiUt18} where channels are modeled conditionally Gaussian: \( \mbh \mid \mbdelta \sim \calN(\mbzero, \mbC_{\mbdelta}) \).
The random vector \( \mbdelta \) collects the angles of arrival/departure and path gains of the main propagation clusters between a mobile terminal and the base station.
The main angles are drawn independently and uniformly from the interval \( [0, 2\pi] \) and the path gains are independent zero-mean Gaussians.
The base station employs a \ac{ula} for both the transmitter and the receiver such that the transmit- and receive-side spatial channel covariance matrix are given by
\begin{equation}
    \mbC_{\mbdelta}^{\text{\{rx,tx\}}} = \int_{-\pi}^\pi g^{\text{\{rx,tx\}}}(\theta; \mbdelta) \mba^{\text{\{rx,tx\}}}(\theta) \mba^{\text{\{rx,tx\}}}(\theta)^\herm d \theta.
\end{equation}
Here,
\begin{equation}
    \mba^{\text{\{rx,tx\}}}(\theta) = [1, e^{j\pi\sin(\theta)}, \dots, e^{j\pi(N_{\text{\{rx,tx\}}}-1)\sin(\theta)}]^\tp
\end{equation}
is the array steering vector for an angle of arrival/departure \( \theta \) and \( g \) is a power density consisting of a sum of weighted Laplace densities whose standard deviations describe the angle spread of the propagation clusters~\cite{3gpp}.
The full channel covariance matrix is constructed as \( \mbC_{\mbdelta} = \mbC_{\mbdelta}^{\text{tx}} \otimes \mbC_{\mbdelta}^{\text{rx}} \) due to the assumption of independent scattering in the vicinity of transmitter and receiver, see, e.g., \cite{KeScPeMoFr02}.
In the \ac{simo} case, \( \mbC_{\mbdelta} \) degenerates to the receive-side covariance matrix \( \mbC_{\mbdelta}^{\text{rx}} \).
For every channel sample, we generate random angles and path gains, combined in \( \mbdelta \), and then draw the sample as \( \mbh \sim \calN(\mbzero, \mbC_\mbdelta) \).

%%%%%%%%%%%%%%%%%%%%%%%%%%%%%%%%%%%%%%%%%%%%%%%%%%%%%%%%%%%%%%%%%%%%%%%%%%%%%%%
%%%%%%%%%%%%%%%%%%%%%%%%%%%%%%%%%%%%%%%%%%%%%%%%%%%%%%%%%%%%%%%%%%%%%%%%%%%%%%%
\subsubsection{QuaDRiGa}\label{sec:quadriga}

Version 2.4 of the QuaDRiGa channel simulator~\cite{QuaDRiGa1, QuaDRiGa2} is used to generate channel samples.
We simulate an urban macrocell scenario at a center frequency of 2.53 GHz.
The base station's height is 25 meters and it covers a \( 120^\circ \) sector.
The minimum and maximum distances between the mobile terminals and the base station are 35 meters and 500 meters, respectively.
In 80\% of the cases, the mobile terminals are located indoors at different floor levels, whereas the mobile terminals' height is 1.5 meters in the case of outdoor locations.

QuaDRiGa models the channel of the \( c \)-th carrier and \( t \)-th time symbol as
\( \mbH_{c,t} = \sum_{\ell=1}^{L} \mbG_{\ell} e^{-2\pi j f_c \tau_{\ell,t}} \)
where \( \ell \) is the path number, and the number of multi-path components \( L \) depends on whether there is \ac{los}, \ac{nlos}, or \ac{o2i} propagation: \( L_\text{LOS} = 37 \), \( L_\text{NLOS} = 61 \) or \( L_\text{O2I} = 37 \), cf.~\cite{KuDaJaTh19}.
The frequency of the \( c \)-th carrier is denoted by \( f_c \) and the \( \ell\)-th path delay of the \( t \)-th time symbol by \( \tau_{\ell, t} \).
The coefficients matrix \( \mbG_{\ell} \) consists of one complex entry for each antenna pair, which comprises the attenuation of a path, the antenna radiation pattern weighting, and the polarization \cite{KuDaJaTh19}.
As described in the QuaDRiGa manual~\cite{QuaDRiGa2}, the generated channels are post-processed to remove the path gain.

For the simulations in \Cref{sec:simo_results} and \Cref{sec:mimo_results}, we generate single-carrier \ac{simo} and \ac{mimo} channels, respectively.
The base station is equipped with a \ac{ula} with \( \Nrx \) ``3GPP-3D'' antennas and the mobile terminals employ \( \Ntx \) ``omni-directional'' antennas.
For the simulations in \Cref{sec:wideband_results}, we consider a SISO system in the spatial domain with \( N_c \) carriers over a bandwidth of 360 kHz and for a time slot with 1 ms duration that is divided into \( N_t \) time symbols.
Each user moves with a certain velocity \( v \) in a random direction.

%%%%%%%%%%%%%%%%%%%%%%%%%%%%%%%%%%%%%%%%%%%%%%%%%%%%%%%%%%%%%%%%%%%%%%%%%%%%%%%
%%%%%%%%%%%%%%%%%%%%%%%%%%%%%%%%%%%%%%%%%%%%%%%%%%%%%%%%%%%%%%%%%%%%%%%%%%%%%%%
\subsection{State-of-the-Art Channel Estimators}\label{sec:baseline_estimators}

A simple baseline algorithm is the \ac{ls} channel estimator which computes
\begin{equation}\label{eq:ls}
    \hhat_{\text{LS}} = \mbA^\dagger \mby
\end{equation}
using the Moore-Penrose pseudoinverse \( \mbA^\dagger \).
For \( \mbA = \mbI \), there is nothing to compute, and for \( \mbA \neq \mbI \), we have a complexity of \( \calO(mN) \) because the pseudoinverse can be precomputed.
Another immediate estimator consists of first estimating a sample covariance matrix \( \mbC = \frac{1}{M} \sum_{m=1}^M \mbh_m \mbh_m^\herm \) using \( M = 10^5 \) training channel samples drawn uniformly from the whole scenario and then computing \ac{lmmse} channel estimates:
\begin{equation}\label{eq:sample_cov}
    \hhat_{\text{sample cov.}} = \mbC \mbA^\herm (\mbA \mbC \mbA^\herm + \mbSigma)^{-1} \mby.
\end{equation}
Since \( \mbC \) is computed in the offline phase, this estimator also has an online complexity of \( \calO(m N) \).
When we work with the 3GPP channel model from \Cref{sec:3gpp}, then the true covariance matrix \( \mbC_{\mbdelta} \) for every channel sample is available and we can compute a genie \ac{lmmse} channel estimate:
\begin{equation}\label{eq:genie_lmmse}
    \hhat_{\text{gen. LMMSE}} = \expec[\mbh \mid \mby, \mbdelta] = \mbC_{\mbdelta} \mbA^\herm (\mbA \mbC_{\mbdelta} \mbA^\herm + \mbSigma)^{-1} \mby
\end{equation}
which presents a lower bound for all estimators.
Note that this is not the optimal \ac{cme} considered in~\eqref{eq:conditional_mean} because of the additional genie knowledge of \( \mbdelta \).
The inverse in~\eqref{eq:genie_lmmse} needs to be computed for every observation because for every observation there is a corresponding \( \mbC_{\mbdelta} \).
Thus, the complexity is \( \calO(m^3 + mN) \).

Many modern channel estimation algorithms focus on \ac{cs} approaches,
see, e.g., the surveys~\cite{BuHuMuDa18,HaMaZwDa20}.
In what follows, we therefore consider two \ac{cs} algorithms~\cite{DaMaAv97,PaReKr93,Tr04,DoMaMo10,MaAnYaBa13}.
Recent non-\ac{cs} algorithms often focus on machine learning methods.
For this reason, we also compare with such methods in the numerical simulations~\cite{NeWiUt18,FeTuKoUt21,SoPoSh20,SoPoMiSh19}.

\ac{cs} approaches assume the channel to be (approximately) sparse: \( \mbh \approx \mbD \mbs \).
Here, \( \mbD \in \C^{N\times L} \) is a \textit{dictionary} and \( \mbs \in \C^L \) is a sparse vector.
A typical choice for \( \mbD \) is an oversampled \ac{dft} matrix (e.g., \cite{AlLeHe15}).
\ac{cs} algorithms then assume that \( \mby = \mbA \mbD \mbs + \mbn \) holds and they recover an estimate \( \hat{\mbs} \) of \( \mbs \) and estimate the channel as \( \hhat = \mbD \hat{\mbs} \).
A well-known \ac{cs} algorithm is \ac{omp} \cite{DaMaAv97,PaReKr93,Tr04}.
\ac{omp} needs to know the sparsity order.
Since order estimation is a difficult problem, we avoid it via a genie-aided approach: \ac{omp} gets access to the true channel to choose the optimal sparsity order.
This yields a performance bound for \ac{omp}.
As explained in~\cite{NeWiUt18}, every iteration has a complexity of \( \calO(N \log(N)) \)
and the number of iterations is equal to the genie-determined sparsity order.
Another algorithm, which we use for comparison, is \ac{amp} \cite{DoMaMo10,MaAnYaBa13}, which does not need to know the sparsity order.
The computational complexity of \ac{amp} is not analyzed in~\cite{DoMaMo10,MaAnYaBa13}.
Since it is an iterative algorithm, it depends without limitation on the number of iterations.
We set the number of iterations per channel estimate to 100 in the simulations.

A \ac{cnn}-based channel estimator was derived in~\cite{NeWiUt18} for the \ac{simo} signal model (cf. \Cref{sec:signal_model_simo}).
In~\cite{FeTuKoUt21}, the \ac{cnn} estimator has been generalized to the \ac{mimo} signal model (cf. \Cref{sec:signal_model_mimo}).
In the \ac{simo} case, we use the \ac{cnn} estimator as described in~\cite{NeWiUt18}.
The activation function is the rectified linear unit and we use the input transform based on the \( 2N \times 2N \) Fourier matrix, cf. \cite[Equation (43)]{NeWiUt18}.
In the \ac{mimo} case, we use the \ac{cnn} estimator as described in~\cite{FeTuKoUt21}
where again the activation function is the rectified linear unit.
In all cases, the \ac{cnn} is trained on samples corresponding to the channel model on which it is tested later.
The computational complexity is \( \calO(N \log(N)) \) \cite{NeWiUt18,FeTuKoUt21}.

The concept of a \ac{cae} was introduced in \cite{BaAbZo19} and adapted for wideband channel estimation (cf. \Cref{sec:signal_model_wideband}) in \cite{SoPoSh20}.
The \ac{cae} is an autoencoder where the encoder is replaced by a \textit{concrete selection layer} that selects the \( N_p \ll N_c N_t \) most informative features of the \( N_c N_t \)-dimensional input.
This corresponds to designing the pilot matrix by selecting the \( N_p \) pilot positions.
The decoder can then be used to perform channel estimation.
During training, noisy channels are given as input, such that the decoder of the \ac{cae} performs denoising and reconstruction of the full-dimensional channels.
Hence, a new \ac{cae} needs to be trained for every different \ac{snr}.
In our simulations, in contrast to \cite{SoPoSh20}, no further denoising networks are applied after the \ac{cae}.
The complexity of \acp{cae} is not analyzed in~\cite{SoPoSh20}.

The authors in \cite{SoPoMiSh19} propose a deep \ac{cnn} approach for 2D wideband channel estimation. 
The estimator, called \textit{ChannelNet}, consists of a combination of an image super-resolution and an image restoration network.
Thus, the networks perform interpolation and denoising of the low-dimensional observations with respect to the high-dimensional channel matrix.
The super-resolution network consists of three 2D convolution layers, whereas the image restoration network consists of 20 2D convolution layers.
Here, too, we train \ac{snr}-specific networks.
The complexity of ChannelNet is not analyzed in~\cite{SoPoMiSh19}.

%%%%%%%%%%%%%%%%%%%%%%%%%%%%%%%%%%%%%%%%%%%%%%%%%%%%%%%%%%%%%%%%%%%%%%%%%%%%%%%
%%%%%%%%%%%%%%%%%%%%%%%%%%%%%%%%%%%%%%%%%%%%%%%%%%%%%%%%%%%%%%%%%%%%%%%%%%%%%%%
\section{Numerical Simulations}\label{sec:numerical_results}

In all simulations, a normalized \ac{mse} (nMSE) is used as performance measure.
Specifically, we generate \( T = 10^4 \) \( N \)-dimensional test channel samples \( \{ \mbh_t \}_{t=1}^T \), obtain corresponding channel estimates \( \hhat_t \), and
define \( \text{nMSE} = \frac{1}{NT} \sum_{t=1}^T \| \mbh_t - \hhat_t \|^2. \)
The noise covariance matrix is \( \mbSigma = \sigma^2 \mbI \).
The test samples are normalized such that \( \expec[\|\mbh\|^2] = N \) holds which
allows us to define an \ac{snr} as \( \frac{1}{\sigma^2} \).
For training purposes, we generate \( M = 10^5 \) channel samples unless stated otherwise.
The number of training samples is always chosen large enough such that increasing \( M \) does not lead to a performance improvement during the testing phase.
The generated channel samples stem from one of the scenarios described in \Cref{sec:sota}.

As explained in \Cref{sec:gmm_channel_estimator} (see also \Cref{alg:gmm_estimator} there), to obtain the \ac{gmm}-based estimator,
we fit one \ac{gmm} using the available training data via an \ac{em} algorithm.
Afterwards, inverses which appear in~\eqref{eq:estimator_full} are precomputed for every \ac{snr}.
In contrast to this approach, the introduced neural network-based estimators need to be newly trained for every \ac{snr} at which we evaluate them.
This includes searching for suitable hyperparameters for every \ac{snr}.

%%%%%%%%%%%%%%%%%%%%%%%%%%%%%%%%%%%%%%%%%%%%%%%%%%%%%%%%%%%%%%%%%%%%%%%%%%%%%%%
%%%%%%%%%%%%%%%%%%%%%%%%%%%%%%%%%%%%%%%%%%%%%%%%%%%%%%%%%%%%%%%%%%%%%%%%%%%%%%%
\subsection{SIMO}\label{sec:simo_results}

\begin{figure}[t]
	\centering
	\begin{tikzpicture}
		\begin{axis}
			[width=\plotwidth,
			height=\plotheightSimoComponents,
			xtick=data,
			xmin=16, 
			xmax=128,
			xlabel={\( N \), number of antennas},
			ymode = log, 
			ymin= 1e-2,
			ymax=1.1e-1,
			ylabel= {Normalized MSE}, 
			ylabel shift = 0.0cm,
			grid = both,
			legend columns = 3,
			legend entries={
				\legendAmp,
				\legendCnn,
				\legendGenielmmse,
				\legendGlobalcov,
				\legendGmm,
				\legendGmmDiag,
				\legendLs,
				\legendOmp,
			},
			legend style={at={(0.0,1.0)}, anchor=south west},
			legend cell align = {left},
			]

			\addplot[amp]
			table[x=antennas, y=amp, col sep=comma]
			{csv/sim_1_to_5_different_nantennas.csv};

			\addplot[cnn]
			table[x=antennas, y=cnn_fft2x_relu_non_hier_False, col sep=comma]
			{csv/2022_05_05_12-46-08_1paths_20lbatch_20lsize_500ebatch_snr=10_new.csv};

			\addplot[genielmmse]
			table[x=antennas, y=genie lmmse, col sep=comma]
			{csv/sim_1_to_5_different_nantennas.csv};

			\addplot[globalcov]
			table[x=antennas, y=global cov, col sep=comma]
			{csv/sim_1_to_5_different_nantennas.csv};

			\addplot[gmm]
			table[x=antennas, y=gmm full all, col sep=comma]
			{csv/sim_1_to_5_different_nantennas.csv};
			
			\addplot[gmmdiag]
			table[x=antennas, y=gmm diag all, col sep=comma]
			{csv/sim_1_to_5_different_nantennas.csv};
			\label{curve:gmmdiag}
			
			\addplot[ls]
			table[x=antennas, y=ls, col sep=comma]
			{csv/sim_1_to_5_different_nantennas.csv};
			
			\addplot[omp]
			table[x=antennas, y=Genie_OMP, col sep=comma]
			{csv/2022_05_05_12-46-08_1paths_20lbatch_20lsize_500ebatch_snr=10_new.csv};
			
		\end{axis}
	\end{tikzpicture}
	\caption{SIMO signal model (\Cref{sec:signal_model_simo}) and 3GPP channel model (\Cref{sec:3gpp}) with \textbf{one} propagation cluster at 10 dB \ac{snr}. The performance of the circulant \ac{gmm} estimator (``circ. GMM'', \ref{curve:gmmdiag}, \Cref{sec:circulant_covs}) is shown too. In both cases, \( K = 128 \) components are used.}
	\label{fig:1path_simo_vs_antennas}
\end{figure}
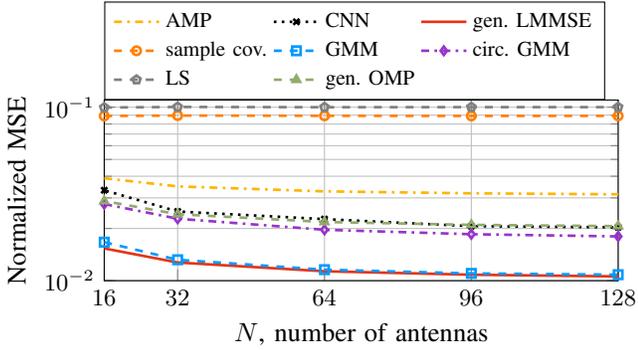

\begin{figure}[t]
	\centering
	\begin{tikzpicture}
		\begin{axis}
			[width=\plotwidth,
			height=\plotheightSimoOnepath,
			xtick=data,
			xmin=-15, 
			xmax=40,
			xlabel={SNR [dB]},
			ymode = log, 
			ymin= 1e-5,
			ymax=1,
			ylabel= {Normalized MSE}, 
			ylabel shift = 0.0cm,
			grid = both,
			legend columns = 2,
			legend entries={
				\legendAmp,
				\legendCnn,
				\legendGenielmmse,
				\legendGlobalcov,
				\legendGmm,
				\legendGmmDiag,
				\legendLs,
				\legendOmp,
			},
			legend style={at={(0.0,0.0)}, anchor=south west},
			legend cell align = {left},
			]

			\addplot[amp]
			table[x=SNR, y=amp, col sep=comma]
			{csv/sim_55_results.csv};

			\addplot[cnn]
			table[x=SNR, y=cnn_fft2x_relu_non_hier_False, col sep=comma]
			{csv/cnn_1paths_128antennas_20lbatch_20lsize_500ebatch_2sigma_update.csv};

			\addplot[genielmmse]
			table[x=SNR, y=genie lmmse, col sep=comma]
			{csv/sim_55_results.csv};

			\addplot[globalcov]
			table[x=SNR, y=global cov, col sep=comma]
			{csv/sim_55_results.csv};

			\addplot[gmm]
			table[x=SNR, y=gmm full all, col sep=comma]
			{csv/sim_55_results.csv};
			
			\addplot[gmmdiag]
			table[x=SNR, y=gmm diag all, col sep=comma]
			{csv/sim_55_results.csv};
			
			\addplot[ls]
			table[x=SNR, y=ls, col sep=comma]
			{csv/sim_55_results.csv};
			
			\addplot[omp]
			table[x=SNR, y=Genie_OMP, col sep=comma]
			{csv/cnn_1paths_128antennas_20lbatch_20lsize_500ebatch_2sigma_update.csv};
			
		\end{axis}
	\end{tikzpicture}
	\caption{SIMO signal model (\Cref{sec:signal_model_simo}) and 3GPP channel model (\Cref{sec:3gpp}) with \textbf{one} propagation cluster and \( N = 128 \) antennas. The performance of the circulant \ac{gmm} estimator (``circ. GMM'', \ref{curve:gmmdiag}, \Cref{sec:circulant_covs}) is shown too. In both cases, \( K = 128 \) components are used.}
	\label{fig:1path_simo}
\end{figure}

\begin{figure}[t]
	\centering
	\begin{tikzpicture}
		\begin{axis}
			[width=\plotwidth,
			height=\plotheightSimoThreepath,
			xtick=data,
			xmin=-15, 
			xmax=40,
			xlabel={SNR [dB]},
			ymode = log, 
			ymin= 5e-5,
			ymax=1,
			ylabel= {Normalized MSE}, 
			ylabel shift = 0.0cm,
			grid = both,
			legend columns = 2,
			legend entries={
				\legendAmp,
				\legendCnn,
				\legendGenielmmse,
				\legendGlobalcov,
				\legendGmm,
				\legendGmmDiag,
				\legendLs,
				\legendOmp,
			},
			legend style={at={(0.0,0.0)}, anchor=south west},
			legend cell align = {left},
			]

			\addplot[amp]
			table[x=SNR, y=amp, col sep=comma]
			{csv/sim_56_results.csv};

			\addplot[cnn]
			table[x=SNR, y=cnn_fft2x_relu_non_hier_False, col sep=comma]
			{csv/cnn_3paths_128antennas_20lbatch_20lsize_500ebatch_2sigma.csv};

			\addplot[genielmmse]
			table[x=SNR, y=genie lmmse, col sep=comma]
			{csv/sim_56_results.csv};

			\addplot[globalcov]
			table[x=SNR, y=global cov, col sep=comma]
			{csv/sim_56_results.csv};

			\addplot[gmm]
			table[x=SNR, y=gmm full all, col sep=comma]
			{csv/sim_56_results.csv};
			
			\addplot[gmmdiag]
			table[x=SNR, y=gmm diag all, col sep=comma]
			{csv/sim_56_results.csv};
			
			\addplot[ls]
			table[x=SNR, y=ls, col sep=comma]
			{csv/sim_56_results.csv};
			
			\addplot[omp]
			table[x=SNR, y=Genie_OMP, col sep=comma]
			{csv/cnn_3paths_128antennas_20lbatch_20lsize_500ebatch_2sigma.csv};
			
		\end{axis}
	\end{tikzpicture}
	\caption{SIMO signal model (\Cref{sec:signal_model_simo}) and 3GPP channel model (\Cref{sec:3gpp}) with \textbf{three} propagation clusters and \( N = 128 \) antennas. The performance of the circulant \ac{gmm} estimator (``circ. GMM'', \ref{curve:gmmdiag}, \Cref{sec:circulant_covs}) is shown too. In both cases, \( K = 128 \) components are used.}
	\label{fig:3paths_simo}
	\vspace{-2mm}
\end{figure}

\begin{figure}[t]
	\centering
	\begin{tikzpicture}
		\begin{axis}
			[width=\plotwidth,
			height=\plotheightSimoQuadriga,
			xtick=data,
			xmin=-15, 
			xmax=40,
			xlabel={SNR [dB]},
			ymode = log, 
			ymin= 1e-4,
			ytick={1e-3,1e-2,1e-1},
			ymax=1,
			ylabel= {Normalized MSE}, 
			ylabel shift = 0.0cm,
			grid = both,
			legend columns = 2,
			legend entries={
				\legendAmp,
				\legendCnn,
				\legendGlobalcov,
				\legendGmm,
				\legendGmmDiag,
				\legendLs,
				\legendOmp,
			},
			legend style={at={(0.0,0.0)}, anchor=south west},
			legend cell align = {left},
			]

			\addplot[amp]
			table[x=SNR, y=amp, col sep=comma]
			{csv/sim_74_results.csv};

			\addplot[cnn]
			table[x=SNR, y=cnn_fft2x_relu_non_hier_False, col sep=comma]
			{csv/cnn_128antennas_quadriga.csv};

			\addplot[globalcov]
			table[x=SNR, y=global cov, col sep=comma]
			{csv/sim_74_results.csv};
			
			\addplot[gmm]
			table[x=SNR, y=gmm full all, col sep=comma]
			{csv/sim_74_results.csv};
			
			\addplot[gmmdiag]
			table[x=SNR, y=gmm diag all, col sep=comma]
			{csv/sim_74_results.csv};
			
			\addplot[ls]
			table[x=SNR, y=ls, col sep=comma]
			{csv/sim_74_results.csv};
			
			\addplot[omp]
			table[x=SNR, y=Genie_OMP, col sep=comma]
			{csv/cnn_128antennas_quadriga.csv};
			
		\end{axis}
	\end{tikzpicture}
	\caption{SIMO signal model (\Cref{sec:signal_model_simo}) and QuaDRiGa channel model (\Cref{sec:quadriga}) with \( N = 128 \) antennas. The performance of the circulant \ac{gmm} estimator (``circ. GMM'', \ref{curve:gmmdiag}, \Cref{sec:circulant_covs}) is shown too. In both cases, \( K = 128 \) components are used.}
	\label{fig:quadriga_simo}
\end{figure}
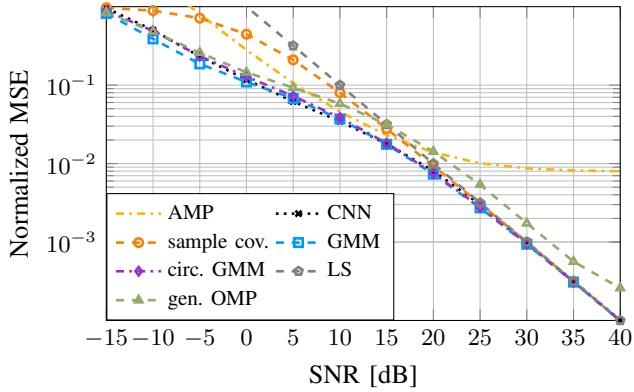

\Cref{fig:1path_simo_vs_antennas,fig:1path_simo,fig:3paths_simo,fig:quadriga_simo,fig:simo_components} show channel estimation results for the \ac{simo} signal model from \Cref{sec:signal_model_simo}.
The \ac{cs} algorithms \ac{omp} and \ac{amp} are used with oversampled \ac{dft} dictionaries that have \( L = 4 N \) and \( L = 2 N \) columns, respectively,
because these parameters yielded the best results.
Unless stated otherwise, the \ac{gmm} fitting process uses \( 19 \cdot 10^4 \) training data.

In \Cref{fig:1path_simo_vs_antennas}, we consider the 3GPP channel model from \Cref{sec:3gpp} with one propagation cluster.
The \ac{snr} is 10 dB and the number \( N \) of antennas is varied.
It is interesting to see that the \ac{gmm}-based estimator performs almost as well as the genie \ac{lmmse} estimator.
As the number \( N \) of antennas increases, the \ac{cnn} estimator starts to outperform the genie \ac{omp} estimator.
The reason for this is that the assumptions under which the \ac{cnn} estimator was derived in~\cite{NeWiUt18} are better fulfilled for a larger number of antennas.
Generally, the relative performance between all estimators hardly differs with different numbers of antennas.
This is an observation we have made in all our experiments.
For this reason, in what follows, the number of antennas is fixed at \( N = 128 \).

In \Cref{fig:1path_simo}, we consider again the 3GPP channel model from \Cref{sec:3gpp} with one propagation cluster.
For almost all \ac{snr} values, the \ac{gmm}-based estimator performs almost as well as the genie \ac{lmmse} estimator.
In the mid-\ac{snr} range, the two \ac{cs} algorithms are approximately equally good.
In \Cref{fig:3paths_simo}, we have three propagation clusters.
A first observation is the strong performance of the \ac{cnn} estimator in the mid-\ac{snr} range.
Note that we can generally not expect any estimator to reach the genie \ac{lmmse} curve because it has more channel knowledge (the true covariance matrix for every sample).
In the higher \ac{snr}-range, the \ac{gmm}-based estimator is the only algorithm still outperforming \ac{ls} estimation.

In \Cref{fig:quadriga_simo}, we concentrate on the QuaDRiGa channel model described in \Cref{sec:quadriga} where the channel covariance matrices and therefore the genie \ac{lmmse} curve are no longer available.
Here, the two \ac{cs} algorithms behave not as similarly as they did in the previous experiments.
Additionally, their performance is not as convincing.
A reason might be that the channels now are not sparse enough.
The \ac{cnn} estimator shows again a good performance and overall the \ac{gmm}-based estimator can compete with it or is better.

In addition to the \ac{gmm} estimator~\eqref{eq:estimator_full},
\Cref{fig:1path_simo_vs_antennas,fig:1path_simo,fig:3paths_simo,fig:quadriga_simo} display the performance of the reduced-complexity \ac{gmm} estimator which uses circulant covariance matrices as described in \Cref{sec:circulant_covs}.
As expected, the estimator's performance suffers but it is still comparable to the other algorithms.
\Cref{fig:quadriga_simo} is particularly interesting where there is not much difference between the full- and low-complexity \ac{gmm} estimators.

\Cref{fig:simo_components} shows the behavior of the \ac{gmm}-based estimator for different numbers of components, \( K \).
We consider an \ac{snr} of 10 dB and the same situation as in \Cref{fig:3paths_simo}:
The 3GPP channel model (cf.~\Cref{sec:3gpp}) with three propagation clusters and \( N = 128 \) antennas.
In addition to \( K \), also the number of training data used to fit the \ac{gmm} is varied.
Since the number of parameters of a \ac{gmm} increases when \( K \) is increased,
more training data is necessary for a good fit.
This effect is clearly visible in \Cref{fig:simo_components}.
Overall, as long as the number of training data is high enough (\( M \geq 300 \) in the figure),
increasing \( K \) leads to an \ac{mse} improvement,
which is in accordance with \Cref{thm:main_result}.

Note that we cannot expect the \ac{gmm} estimator to converge to the genie \ac{lmmse} estimator~\eqref{eq:genie_lmmse} (which is displayed in \Cref{fig:3paths_simo}).
The genie \ac{lmmse} estimator has more knowledge (namely the true channel covariance matrix \( \mbC_{\mbdelta} \)) and is therefore not the \ac{cme}, \( \hhat = \expec[\mbh \mid \mby] \), which we want to approximate in \Cref{thm:main_result}.
The \ac{cme} cannot be computed in closed form in the considered scenario which is the main motivation to study the \ac{gmm} estimator in the first place.

\Cref{fig:quadriga_simo_K} also shows the \ac{gmm} estimator's behavior for different numbers of components, \( K \),
but now for the QuaDRiGa channel model (cf.~\Cref{sec:quadriga}).
For all displayed \acp{snr}, a saturation can be observed as \( K \) is increased,
and already a moderate number of components can lead to a satisfactory estimation performance.
Altogether, a smaller \( K \) tends to be sufficient for higher \acp{snr}.
Generally, a suitable number of components needs to be determined based on the training data size as well as on the desired estimator complexity.

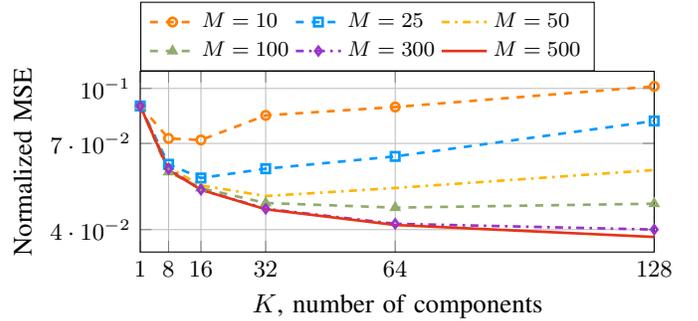
\begin{figure}[t]
	\centering
	\begin{tikzpicture}
		\begin{axis}
			[width=\plotwidth,
			height=\plotheightSimoComponents,
			xtick=data,
			xlabel={\( K \), number of components},
			ymode = log,
			xmin=1,
			xmax=128,
			ytick={4e-2,7e-2,1e-1},
			yticklabels={\( 4 \cdot 10^{-2} \), \( 7 \cdot 10^{-2} \), \( 10^{-1} \)},
			ylabel={Normalized MSE},
			ylabel shift = 0.0cm,
			grid = both,
			legend columns = 3,
			legend entries={
                \footnotesize \( M = 10 \),
                \footnotesize \( M = 25 \),
                \footnotesize \( M = 50 \),
                \footnotesize \( M = 100 \),
                \footnotesize \( M = 300 \),
                \footnotesize \( M = 500 \),
			},
			legend cell align = {left},
			legend style={at={(0.0,1.0)}, anchor=south west},
			]

			\addplot[globalcov]
			table[x=components, y=snr_10, col sep=comma, discard if not={ntrain}{10}]
			{csv/case_6_components_ntrain.csv};

			\addplot[gmm]
			table[x=components, y=snr_10, col sep=comma, discard if not={ntrain}{25}]
			{csv/case_6_components_ntrain.csv};

			\addplot[amp]
			table[x=components, y=snr_10, col sep=comma, discard if not={ntrain}{50}]
			{csv/case_6_components_ntrain.csv};

			\addplot[omp]
			table[x=components, y=snr_10, col sep=comma, discard if not={ntrain}{100}]
			{csv/case_6_components_ntrain.csv};

			\addplot[gmmdiag]
			table[x=components, y=snr_10, col sep=comma, discard if not={ntrain}{300}]
			{csv/case_6_components_ntrain.csv};

			\addplot[genielmmse]
			table[x=components, y=snr_10, col sep=comma, discard if not={ntrain}{500}]
			{csv/case_6_components_ntrain.csv};

		\end{axis}
	\end{tikzpicture}
	\caption{SIMO signal model (\Cref{sec:signal_model_simo}) and 3GPP channel model (\Cref{sec:3gpp}) with \textbf{three} propagation clusters and \( N = 128 \) antennas. The SNR is 10 dB. The GMM estimator is trained using \( M \cdot 10^3 \) samples.}
	\label{fig:simo_components}
\end{figure}

\begin{figure}[t]
	\centering
	\begin{tikzpicture}
		\begin{axis}
			[width=\plotwidth,
			height=\plotheightSimoOnepath,
			xtick={1,2,4,8,16,32,64,128},
			xticklabels={1,,,8,16,32,64,128},
			xmin=1, 
			xmax=128,
			xlabel={\( K \), number of components},
			ymode = log, 
			ymin=1e-3,
			ymax=1.0,
			ylabel = {Normalized MSE}, 
			ylabel shift = 0.0cm,
			grid = both,
			legend columns = 2,
			legend entries={
				\( \text{SNR} = -10 \text{ dB} \),
				\( \text{SNR} = 0 \text{ dB} \),
				\( \text{SNR} = 10 \text{ dB} \),
				\( \text{SNR} = 20 \text{ dB} \),
			},
			legend style={at={(0.0,0.0)}, anchor=south west},
			legend cell align = {left},
			]

			\addplot[gmm]
			table[x=K, y=nMSE, col sep=comma, discard if not={SNR}{-10}]
			{csv/quadriga_K.csv};

			\addplot[omp]
			table[x=K, y=nMSE, col sep=comma, discard if not={SNR}{0}]
			{csv/quadriga_K.csv};

			\addplot[cnn]
			table[x=K, y=nMSE, col sep=comma, discard if not={SNR}{10}]
			{csv/quadriga_K.csv};

			\addplot[gmmdiag]
			table[x=K, y=nMSE, col sep=comma, discard if not={SNR}{20}]
			{csv/quadriga_K.csv};
  
		\end{axis}
	\end{tikzpicture}
	\caption{SIMO signal model (\Cref{sec:signal_model_simo}) and QuaDRiGa channel model (\Cref{sec:quadriga}) with \( N = 128 \) antennas. The training data size is \( 19 \cdot 10^4 \).}
	\label{fig:quadriga_simo_K}
\end{figure}
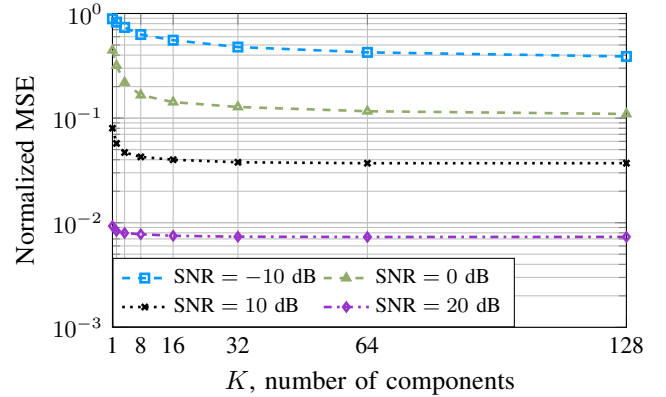

%%%%%%%%%%%%%%%%%%%%%%%%%%%%%%%%%%%%%%%%%%%%%%%%%%%%%%%%%%%%%%%%%%%%%%%%%%%%%%%
%%%%%%%%%%%%%%%%%%%%%%%%%%%%%%%%%%%%%%%%%%%%%%%%%%%%%%%%%%%%%%%%%%%%%%%%%%%%%%%
\subsection{MIMO}\label{sec:mimo_results}

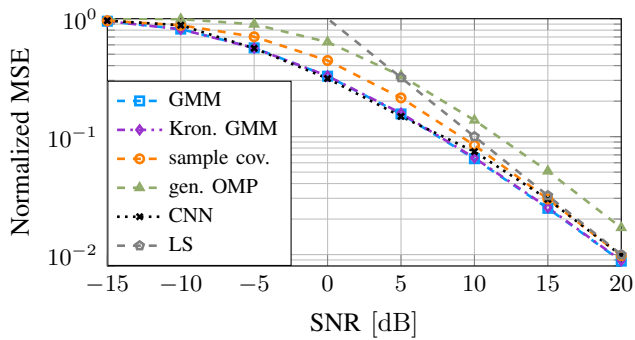
\begin{figure}[t]
	\centering
	\begin{tikzpicture}
		\begin{axis}
			[width=\plotwidth,
			height=\plotheightMimo,
			xtick=data, 
			xmin=-15, 
			xmax=20,
			xlabel={SNR \( [\operatorname{dB}] \)},
			ymode = log, 
			ymin= 8*1e-3,
			ymax=1,
			ylabel= {Normalized MSE}, 
			ylabel shift = 0.0cm,
			grid = both,
			legend columns = 1,
			legend entries={
				\legendGmm,
				\legendGmmKron,
				\legendGlobalcov,
				\legendOmp,
				\legendCnn,
				\legendLs,
			},
			legend style={at={(0.0,0.0)}, anchor=south west},
			legend cell align = {left},
			]
			
			\addplot[gmm]
			table[x= SNR, y=gmm_full, col sep=comma]
			{csvdat/MIMO_GMM/2021-08-24_19-36-48_tx=4_rx=32_train=100000.csv};
			\label{curve:gmm}
			
			\addplot[gmmdiag]
			table[x= SNR, y=gmm_kron, col sep=comma]
			{csvdat/MIMO_GMM/2021-08-24_19-36-48_tx=4_rx=32_train=100000.csv};
			
			\addplot[globalcov]
			table[x= SNR, y=Genie_Aided, col sep=comma]
			{csvdat/MIMO_GMM/2021-08-24_15-41-19_3paths_32BS_antennas_4MS_antennas_35AS_1epochs_100ebatch_20lbatchsize_4pilots.csv};
			
			\addplot[omp]
			table[x= SNR, y=Genie_OMP, col sep=comma]
			{csvdat/MIMO_GMM/2021-08-24_15-41-19_3paths_32BS_antennas_4MS_antennas_35AS_1epochs_100ebatch_20lbatchsize_4pilots.csv};
			
			\addplot[cnn]
			table[x= SNR, y=cnn_fft_relu, col sep=comma]
			{csvdat/MIMO_GMM/2021-08-26_07-36-55_3paths_32BS_antennas_4MS_antennas_35AS_1epochs_100ebatch_20lbatchsize_4pilots.csv};
			
			\addplot[ls]
			table[x=SNR, y=ls, col sep=comma]
			{csv/sim_55_results.csv};
			
		\end{axis}
	\end{tikzpicture}
	\caption{MIMO signal model (\Cref{sec:signal_model_mimo}) and QuaDRiGa channel model (\Cref{sec:quadriga}) with \( (\Nrx, \Ntx) = (32, 4) \). The GMM estimator (``GMM'', \ref{curve:gmm}) uses \( K = 32 \) components  and the Kronecker GMM estimator (``Kron. GMM'', \ref{curve:gmmdiag}, \Cref{sec:kronecker_covs}) uses \( K = 4 \times 8 \) components.}
	\label{fig:quadriga_mimo}
\end{figure}

\begin{figure}[t]
	\centering
	\begin{tikzpicture}
		\begin{axis}
			[width=\plotwidth,
			height=\plotheightMimoTrainsamples,
			xtick=data, 
			xlabel= {\( M \), number of training samples},
			ymode = log, 
			xmode=log,
			xmin=100,
			xmax=100000,
			ymin= 1e-2,
			ymax=1,
			ylabel= {Normalized MSE}, 
			ylabel shift = 0.0cm,
			grid = both,
			legend columns = 1,
			legend entries={
			    \legendGmm,
				\legendGmmKron,
			},
			legend style={at={(1.0,1.0)}, anchor=north east},
			legend cell align = {left},
			]
			
			\addplot[gmm]
			table[x=num, y=gmm_full, col sep=comma]
			{csvdat/MIMO_GMM/2021-08-24_19-36-48_tx=4_rx=32_train_snr=10db.csv};
			
			\addplot[gmmdiag]
			table[x=num, y=gmm_kron, col sep=comma]
			{csvdat/MIMO_GMM/2021-08-24_19-36-48_tx=4_rx=32_train_snr=10db.csv};

		\end{axis}
	\end{tikzpicture}
	\caption{MIMO signal model (\Cref{sec:signal_model_mimo}) and QuaDRiGa channel model (\Cref{sec:quadriga}) with \( (\Nrx, \Ntx) = (32, 4) \). The SNR is 10 dB. The Kronecker product GMM estimator (``Kron. GMM'', \ref{curve:gmmdiag}, \Cref{sec:kronecker_covs}) has fewer parameters than the normal GMM estimator. Both have \( K = 32 \) components.}
	\label{fig:mimo_train_samples}
\end{figure}
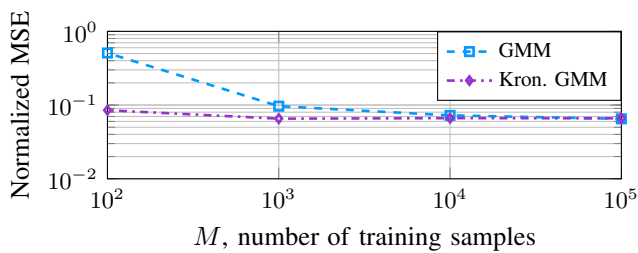

For the \ac{mimo} simulations whose signal model is described in \Cref{sec:signal_model_mimo},
we use a scaled \ac{dft} pilot matrix \( \mbP \).
Here, we are mainly interested in comparing the \ac{gmm} estimator in~\eqref{eq:estimator_full} to a \ac{gmm} estimator which uses Kronecker product covariance matrices as described in 
\Cref{sec:kronecker_covs}.
We generate \( M = 10^5 \) training channel samples \( \{ \mbH_i \}_{i=1}^M \) and use them in two different ways.
Either, we use the (vectorized) channel matrices \( \mbH_i \) directly to fit a single \( K = 32 \) components \ac{gmm}.
Or, we view all columns and all rows of the data as separate data sets and fit two \acp{gmm}:
One transmit side \ac{gmm} with \( \Ktx = 4 \) components using the rows,
one receive side \ac{gmm} with \( \Krx = 8 \) components using the columns.
Afterwards, we combine the two \acp{gmm} to a single full-size \ac{gmm} with \( K = \Ktx \Krx = 32 \) components for the whole data set of channel matrices.
To this end, the \( K \) full-size means and covariances are directly calculated by combinatorial computation of the Kronecker products of the transmit- and receive-side \ac{gmm} components.
The corresponding mixing coefficients can be computed by fixing the means and covariances and performing a single E-step (cf. \cite{bookBi06}) in the \ac{em} algorithm to obtain them.

\Cref{fig:quadriga_mimo} displays simulation results.
A first observation is that the two \ac{gmm} estimators as well as the \ac{cnn} estimator perform very similarly, with minor differences in the lower and higher \ac{snr}-regimes.
Further, these three estimators outperform the sample covariance matrix-based estimator and the genie-aided \ac{omp} algorithm, which uses a Kronecker product of two two-times oversampled \ac{dft} matrices as dictionary.
A second observation is that there is almost no difference between the \ac{gmm} estimator with or without Kronecker product covariance matrices.
This is insofar surprising as the normal \ac{gmm} consists of \( K = 32 \) covariance matrices of dimension \( N \times N \) with \( N = 32 \cdot 4 = 128 \)
which means that it has \( K \frac{N(N+1)}{2} = 264192 \) covariance parameters,
whereas in contrast, the Kronecker \ac{gmm} has only \( \Krx \frac{\Nrx(\Nrx+1)}{2} + \Ktx \frac{\Ntx(\Ntx+1)}{2} = 4224 + 40 = 4264 \) covariance parameters.
Since the Kronecker \ac{gmm} has significantly fewer parameters,
it should require a smaller number \( M \) of training data.
This is confirmed in \Cref{fig:mimo_train_samples} where the two estimators are compared at an \ac{snr} of 10 dB for varying \( M \).
The curves intersect between \( M = 10^4 \) and \( M = 10^5 \).

%%%%%%%%%%%%%%%%%%%%%%%%%%%%%%%%%%%%%%%%%%%%%%%%%%%%%%%%%%%%%%%%%%%%%%%%%%%%%%%
%%%%%%%%%%%%%%%%%%%%%%%%%%%%%%%%%%%%%%%%%%%%%%%%%%%%%%%%%%%%%%%%%%%%%%%%%%%%%%%
\subsection{Wideband}\label{sec:wideband_results}

In this section, we show numerical results for the wideband signal model described in Section \ref{sec:signal_model_wideband} and with the QuaDRiGa simulation setup described in Section \ref{sec:quadriga}.
We chose a typical 5G frame structure as defined in \cite{3GPP_5G} with \( N_c = 24 \) carriers over a bandwidth of 360 kHz with 15 kHz carrier spacing and with \( N_t = 14 \) time symbols over a time slot with 1 ms duration.
The number of pilot symbols is \( N_p = 50 \), which means that 50 out of the \( 24 \times 14 = 336 \) resource elements are occupied with pilot tones.
We compare the \ac{gmm}-based estimator with the \ac{lmmse} estimator based on the sample covariance matrix (see~\eqref{eq:sample_cov}), with the \ac{cae} approach, and with the ChannelNet estimator (see \Cref{sec:baseline_estimators}).
The training data for each approach consist of \( M = 10^5 \) channel realizations and corresponding observations from the pilot positions.

In \Cref{fig:wideband1}, we depict \ac{mse} results over the \ac{snr} for a scenario where every user moves at \( v=3 \) km/h speed, in which the block-type pilot arrangement has shown the best results.
One can observe superior performance of the \ac{cae} over the sample covariance \ac{lmmse} estimator and the ChannelNet, which may be due to the fact that the \ac{cae} optimizes the pilot pattern.
However, the \ac{gmm} approach is able to outperform all baseline algorithms, where the performance gap increases with increasing \ac{snr}.
Further, the impact of more components (from \( K=8 \) to \( K=128 \)) is visible and results in better performance for all \ac{snr} values.

\Cref{fig:wideband2} shows the same setup but now each user's velocity is randomly chosen between 0 - 300 km/h (in both training and testing sets), which makes the estimation more challenging and the lattice-type pilot arrangement superior.
First, the performance gap to the sample covariance \ac{lmmse} estimator increases, which is a result of the more diverse setting that cannot be captured well by a single covariance matrix.
Further, the number of \ac{gmm} components seems to play a more important role.
The \ac{gmm} with \( K=128 \) components is still able to compete with the ChannelNet and \ac{cae} approaches.

Finally, \Cref{fig:wideband3} shows the \ac{mse} behavior for different numbers of components, \( K \), and for different amounts of training data used to fit the \ac{gmm}.
Similar to the results in \Cref{fig:simo_components}, also the wideband case shows an improving performance for increasing numbers of components.
This at least provides numerical evaluation of the convergence of the \ac{gmm} estimator for noninvertible observation matrices.

\begin{figure}[t]
	\centering
	\begin{tikzpicture}
		\begin{axis}
			[width=\plotwidth,
			height=\plotheightWidebandFixedSpeed,
			xtick=data, 
			xmin=-15, 
			xmax=40,
			xlabel={SNR $[\operatorname{dB}]$},
			ymode = log, 
			ymin= 1e-5,
			ymax=1,
			ylabel= {Normalized MSE}, 
			ylabel shift = 0.0cm,
			grid = both,
			legend columns = 1,
			legend entries={
				\legendGlobalcov \, (block),
				\legendChannelnet \, (block),
				\legendCae,
				\legendGmm \, (block); \( K = 8 \),
				\legendGmm \, (block); \( K = 128 \),
			},
			legend style={at={(0.0,0.0)}, anchor=south west},
			legend cell align = {left},
			]

			\addplot[globalcov]
			table[x= SNR, y=MMSE_block, col sep=comma]
			{csvdat/CAE_chEst/2021-07-19_16-07-10_carrier=24_symbols=14_pilots=50_epochs=10_cluster=8_v=3kmh.csv};
			
			\addplot[channelnet]
			table[x= SNR, y=channel_net, col sep=comma]
			{csvdat/ChannelNet/2021-07-19_15-00-35_carrier=24_symbols=14_pilots=50_v=3kmh_debug_mode=False_type=block.csv};

			\addplot[cae]
			table[x= SNR, y=CAE, col sep=comma]
			{csvdat/CAE_chEst/2021-07-26_05-51-24_carrier=24_symbols=14_pilots=50_epochs=300_cluster=8_v=3kmh_diff.csv};

			\addplot[gmm]
			table[x= SNR, y=gmm_from_y_block, col sep=comma]
			{csvdat/GMM/2021-07-21_11-25-55_carrier=24_symbols=14_pilots=50_epochs=300_cluster=8_v=3kmh.csv};
			
			\addplot[gmmdiag]
			table[x= SNR, y=gmm_from_y_block, col sep=comma]
			{csvdat/GMM/2021-09-21_06-16-52_carrier=24_symbols=14_pilots=50_epochs=2_cluster=128_v=3kmh_diff.csv};
			
		\end{axis}
	\end{tikzpicture}
	\caption{Wideband signal model (\Cref{sec:signal_model_wideband}) and QuaDRiGa channel model (\Cref{sec:quadriga})
	with \( N_p = 50 \) pilot tones for \( N_c = 24 \) carriers and \( N_t = 14 \) time symbols for \( v = 3 \) km/h.}
	\label{fig:wideband1}
\end{figure}
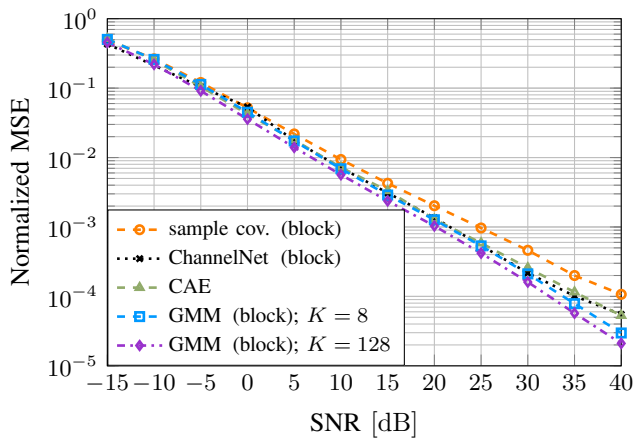

\begin{figure}[t]
	\centering
	\begin{tikzpicture}
		\begin{axis}
			[width=\plotwidth,
			height=\plotheightWidebandVariousSpeed,
			xtick=data, 
			xmin=-15, 
			xmax=40,
			xlabel={SNR $[\operatorname{dB}]$},
			ymode = log, 
			ymin= 1e-4,
			ymax=1,
			ylabel= {Normalized MSE}, 
			ylabel shift = 0.0cm,
			grid = both,
			legend columns = 1,
			legend entries = {
			    \legendGlobalcov \, (lattice),
			    \legendChannelnet \, (lattice),
			    \legendCae,
			    \legendGmm \, (lattice); \( K = 8 \),
			    \legendGmm \, (lattice); \( K = 128 \),
			},
			legend style={at={(0.0,0.0)}, anchor=south west},
			legend cell align = {left},
			]
			
			\addplot[globalcov]
			table[x= SNR, y=MMSE_lattice, col sep=comma]
			{csvdat/CAE_chEst/2021-09-03_22-31-32_carrier=24_symbols=14_pilots=50_epochs=100_cluster=64_v=300kmh_diff.csv};
			
			\addplot[channelnet]
			table[x= SNR, y=channel_net_lattice, col sep=comma]
			{csvdat/CAE_chEst/2021-09-08_10-51-37_carrier=24_symbols=14_pilots=50_epochs=100_cluster=128_v=300kmh_diff.csv};
			
    		\addplot[cae]
			table[x= SNR, y=CAE, col sep=comma]
			{csvdat/CAE_chEst/2021-09-04_22-57-48_carrier=24_symbols=14_pilots=50_epochs=100_cluster=16_v=300kmh_diff.csv};
			
			\addplot[gmm]
			table[x= SNR, y=gmm_from_y_lattice, col sep=comma]
			{csvdat/CAE_chEst/2021-09-04_08-27-44_carrier=24_symbols=14_pilots=50_epochs=100_cluster=8_v=300kmh_diff.csv};
			
			\addplot[gmmdiag]
			table[x= SNR, y=gmm_from_y_lattice, col sep=comma]
			{csvdat/CAE_chEst/2021-09-04_12-51-56_carrier=24_symbols=14_pilots=50_epochs=100_cluster=128_v=300kmh_diff.csv};
		
		\end{axis}
	\end{tikzpicture}
	\caption{Wideband signal model (\Cref{sec:signal_model_wideband}) and QuaDRiGa channel model (\Cref{sec:quadriga}) with \( N_p=50 \) pilot tones for \( N_c=24 \) carriers and \( N_t=14 \) time symbols for \( v \in [0,300] \) km/h.}
	\label{fig:wideband2}
\end{figure}

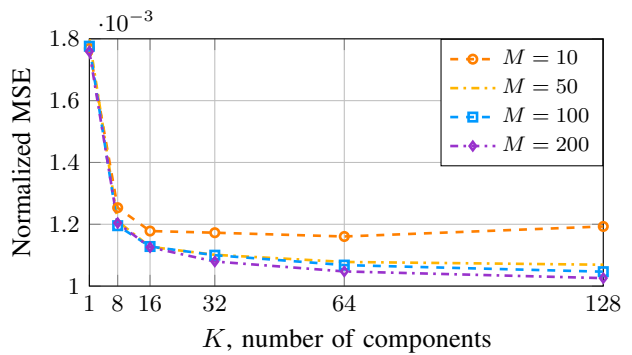
\begin{figure}[t]
	\centering
	\begin{tikzpicture}
		\begin{axis}
			[width=\plotwidth,
			height=\plotheightWidebandComponents,
			xtick=data, 
			xmin=1, 
			xmax=128,
			xlabel={\( K \), number of components},
			ymin=1e-3,
			ymax=1.8*1e-3,
			ylabel= {Normalized MSE}, 
			ylabel shift = 0.0cm,
			grid = both,
			legend columns = 1,
			legend entries={
				\footnotesize \( M = 10 \),
				\footnotesize \( M = 50 \),
				\footnotesize \( M = 100 \),
				\footnotesize \( M = 200 \),
		    },
			legend style={at={(1.0,1.0)}, anchor=north east},
			legend cell align = {left},
			]
			
			\addplot[globalcov]
			table[x=comp, y=gmm_from_y_block, col sep=comma]
			{csvdat/GMM/plot_comp/2021-10-21_17-04-21_carrier=24_symbols=14_pilots=50_epochs=2_v=3kmh_ntrain=10000_snr=20.csv};
			
			\addplot[amp]
			table[x=comp, y=gmm_from_y_block, col sep=comma]
			{csvdat/GMM/plot_comp/2021-10-21_17-04-21_carrier=24_symbols=14_pilots=50_epochs=2_v=3kmh_ntrain=50000_snr=20.csv};
			
			\addplot[gmm]
			table[x=comp, y=gmm_from_y_block, col sep=comma]
			{csvdat/GMM/plot_comp/2021-10-21_17-04-21_carrier=24_symbols=14_pilots=50_epochs=2_v=3kmh_ntrain=100000_snr=20.csv};
			
			\addplot[gmmdiag]
			table[x=comp, y=gmm_from_y_block, col sep=comma]
			{csvdat/GMM/plot_comp/2021-10-21_17-04-21_carrier=24_symbols=14_pilots=50_epochs=2_v=3kmh_ntrain=200000_snr=20.csv};
			
		\end{axis}
	\end{tikzpicture}
	\caption{Wideband signal model (\Cref{sec:signal_model_wideband}) and QuaDRiGa channel model (\Cref{sec:quadriga}) with \( N_p = 50 \) pilot tones for \( N_c = 24 \) carriers and \( N_t = 14 \) time symbols for \( v = 3 \) km/h and block-type pilot pattern at an SNR of 20 dB. The GMM estimator is trained using \( M \cdot 10^3 \) samples.}
	\label{fig:wideband3}
\end{figure}

%%%%%%%%%%%%%%%%%%%%%%%%%%%%%%%%%%%%%%%%%%%%%%%%%%%%%%%%%%%%%%%%%%%%%%%%%%%%%%%
%%%%%%%%%%%%%%%%%%%%%%%%%%%%%%%%%%%%%%%%%%%%%%%%%%%%%%%%%%%%%%%%%%%%%%%%%%%%%%%
\section{Conclusion and Outlook}

We studied the behavior of a \ac{gmm} channel estimator when the number of \ac{gmm} components is increased.
The \ac{gmm} estimator is tailored for a particular
communications environment.
This is the case because the underlying \ac{gmm}
uses training data which stem, for example,
from the coverage area of a base station.
Thereafter, the \ac{gmm} estimator can be employed for channel estimation in this environment.

In case of an invertible observation matrix, we proved the convergence of the \ac{gmm} estimator to the optimal \ac{cme}.
Notably, the proof only assumes that a sequence of \acp{pdf} exists which converges uniformly to the true channel \ac{pdf}.
An example of such a sequence is given by \acp{gmm} but no properties unique to \acp{gmm} are used in the proof.
In particular, any sequence of \acp{cme} which is based on a uniformly convergent sequence of \acp{pdf} also converges to the optimal \ac{cme}.
This invites the study of new channel estimators based on other universal approximators.

While the theoretical results are of an asymptotic nature,
we analyzed a number of practically relevant settings (\ac{simo}, \ac{mimo}, and wideband) in numerical simulations.
There, already with a moderate number of \ac{gmm} components, the proposed estimator outperforms various state-of-the-art approaches in all depicted scenarios.
In particular, increasing the number of \ac{gmm} components lead to a performance improvement even for noninvertible observation matrices.
Additionally, we demonstrated how scenario-specific insights can be used to reduce the \ac{gmm} estimator's complexity.

%%%%%%%%%%%%%%%%%%%%%%%%%%%%%%%%%%%%%%%%%%%%%%%%%%%%%%%%%%%%%%%%%%%%%%%%%%%%%%%
%%%%%%%%%%%%%%%%%%%%%%%%%%%%%%%%%%%%%%%%%%%%%%%%%%%%%%%%%%%%%%%%%%%%%%%%%%%%%%%
\appendices

%%%%%%%%%%%%%%%%%%%%%%%%%%%%%%%%%%%%%%%%%%%%%%%%%%%%%%%%%%%%%%%%%%%%%%%%%%%%%%%
%%%%%%%%%%%%%%%%%%%%%%%%%%%%%%%%%%%%%%%%%%%%%%%%%%%%%%%%%%%%%%%%%%%%%%%%%%%%%%%
\section{Proof of Theorem 2}\label{sec:proof_of_main_result}

The proof of \Cref{thm:main_result} makes use of and is presented after the following lemma.

\begin{lemma}\label{lem:bound_norm}
For an arbitrary \( \mby \in \R^N \), it holds
\begin{multline*}
    \int \| \mbh \| f_{\Brv n}(\mby - \mbA\mbh) d \mbh
    \leq \\ \sqrt{\det(\mbA^{-1}\mbA^{-\tp})} \sqrt{\| \mbA^{-1} \mby \|^2 + \mathrm{trace}(\mbA^{-1}\mbSigma\mbA^{-\tp})}.
\end{multline*}
\end{lemma}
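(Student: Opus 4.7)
The plan is to reduce the integral to an expectation over the Gaussian noise distribution and then control that expectation by Jensen's inequality and a standard moment computation.

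First, I would perform the change of variables $\mbx = \mby - \mbA\mbh$, which is valid precisely because $\mbA$ is invertible. This yields $\mbh = \mbA^{-1}(\mby - \mbx)$ and Jacobian $d\mbh = |\det(\mbA^{-1})|\, d\mbx$, where $|\det(\mbA^{-1})| = \sqrt{\det(\mbA^{-1}\mbA^{-\tp})}$. The integral then becomes
\begin{equation*}
    \sqrt{\det(\mbA^{-1}\mbA^{-\tp})} \int \| \mbA^{-1}(\mby - \mbx) \| f_{\Brv n}(\mbx) \, d\mbx
    = \sqrt{\det(\mbA^{-1}\mbA^{-\tp})} \, \expec[\,\| \mbA^{-1}(\mby - \Brv n) \|\,],
\end{equation*}
i.e., an expectation with respect to $\Brv n \sim \calN(\mbzero, \mbSigma)$.

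Next, since $\sqrt{\cdot}$ is concave, Jensen's inequality gives $\expec[\|\mbA^{-1}(\mby - \Brv n)\|] \leq \sqrt{\expec[\|\mbA^{-1}(\mby - \Brv n)\|^2]}$. Expanding the square and using $\expec[\Brv n] = \mbzero$ together with the standard identity $\expec[\Brv n^\tp \mbM \Brv n] = \mathrm{trace}(\mbM \mbSigma)$ for the quadratic form with $\mbM = \mbA^{-\tp}\mbA^{-1}$, I obtain
\begin{equation*}
    \expec[\|\mbA^{-1}(\mby - \Brv n)\|^2] = \|\mbA^{-1}\mby\|^2 + \mathrm{trace}(\mbA^{-\tp}\mbA^{-1}\mbSigma) = \|\mbA^{-1}\mby\|^2 + \mathrm{trace}(\mbA^{-1}\mbSigma\mbA^{-\tp}),
\end{equation*}
where the last equality uses the cyclic property of the trace. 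Combining this with the Jacobian prefactor yields exactly the claimed bound.

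None of the steps really constitute an obstacle, as each is essentially mechanical: invertibility of $\mbA$ is what makes the change of variables legitimate, Jensen's inequality handles the nonlinearity of the norm, and the quadratic-form expectation is a textbook identity. The only place where care is required is keeping the transposes consistent when moving between $\mbA^{-\tp}\mbA^{-1}\mbSigma$ and $\mbA^{-1}\mbSigma\mbA^{-\tp}$ under the trace, but this is immediate from cyclicity.
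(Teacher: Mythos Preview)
Your proof is correct and follows essentially the same approach as the paper: both reduce the integral to a Gaussian expectation (you via the substitution $\mbx = \mby - \mbA\mbh$, the paper by directly recognizing $f_{\Brv n}(\mby - \mbA\mbh)$ as an unnormalized Gaussian in $\mbh$ with mean $\mbA^{-1}\mby$ and covariance $\mbA^{-1}\mbSigma\mbA^{-\tp}$), then apply Jensen's inequality and the standard second-moment identity. The two presentations are equivalent reparameterizations of the same argument.
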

\begin{proof}
Recall that \( f_{\Brv n} \) denotes a Gaussian \ac{pdf} with mean zero and covariance matrix \( \mbSigma \).
We have
\begin{align}
    &f_{\mbn}(\mby - \mbA\mbh) = \frac{\exp(-\frac{1}{2}(\mby - \mbA\mbh)^\tp \mbSigma^{-1} (\mby - \mbA\mbh))}{\sqrt{(2\pi)^N \det(\mbSigma)}}
    \\&= \frac{\exp(-\frac{1}{2}(\mbA^{-1}\mby - \mbh)^\tp \mbA^\tp \mbSigma^{-1} \mbA (\mbA^{-1}\mby - \mbh))}{\sqrt{(2\pi)^N \det(\mbSigma)}}
    \\&= \frac{\exp(-\frac{1}{2}(\mbh - \mbA^{-1}\mby)^\tp (\mbA^{-1} \mbSigma \mbA^{-\tp})^{-1} (\mbh - \mbA^{-1}\mby))}{\sqrt{(2\pi)^N \det(\mbSigma)}}.
\end{align}
Therefore, \( (\sqrt{\det(\mbA^{-1} \mbA^{-\tp})})^{-1} f_{\mbn}(\mby - \mbA\mbh) =: \tilde{f}(\mbh) \) is a Gaussian \ac{pdf} with mean vector \( \tilde{\mbmu} := \mbA^{-1} \mby \) and covariance matrix \( \tilde{\mbSigma} := \mbA^{-1} \mbSigma \mbA^{-\tp} \).
We are interested in computing
\begin{align}
    \nonumber
    &\int \| \mbh \| f_{\Brv n}(\mby - \mbA\mbh) d \mbh
    \\&= \sqrt{\det(\mbA^{-1} \mbA^{-\tp})} \int \| \mbh \| \frac{f_{\mbn}(\mby - \mbA\mbh)}{\sqrt{\det(\mbA^{-1} \mbA^{-\tp})}} d \mbh
    \\&= \sqrt{\det(\mbA^{-1} \mbA^{-\tp})} \int \| \mbh \| \tilde{f}(\mbh) d \mbh
    \label{eq:to_be_bounded}
\end{align}
The integral in~\eqref{eq:to_be_bounded} computes the expected value of the norm of a random vector with the \ac{pdf} \( \tilde{f} \).
Let \( \Brv w \) be a standard Gaussian random vector (with mean \( \mbzero \in \R^N \) and covariance matrix \( \mbI \in \R^{N\times N} \)) and let \( \tilde{\mbSigma}^{\frac{1}{2}} \) be a square root of the covariance matrix \( \tilde{\mbSigma} = \tilde{\mbSigma}^{\frac{\tp}{2}} \tilde{\mbSigma}^{\frac{1}{2}} \).
Then, \( \tilde{\mbmu} + \tilde{\mbSigma}^{\frac{1}{2}} \Brv w \) is a random vector with the \ac{pdf} \( \tilde{f} \).
Thus, we can express the integral in~\eqref{eq:to_be_bounded} as:
\begin{equation}
    \int \| \mbh \| \tilde{f}(\mbh) d \mbh
     = \expec_{\Brv w}[\| \tilde{\mbmu} + \tilde{\mbSigma}^{\frac{1}{2}} \Brv w \|]
    \label{eq:expectation_wrt_w}
\end{equation}
where we compute the expected value with respect to the standard Gaussian random vector \( \Brv w \).
Jensen's inequality yields:
\begin{align}
    &\left( \expec_{\Brv w}[\| \tilde{\mbmu} + \tilde{\mbSigma}^{\frac{1}{2}} \Brv w \|] \right)^2
    \leq \expec_{\Brv w}[\| \tilde{\mbmu} + \tilde{\mbSigma}^{\frac{1}{2}} \Brv w \|^2]
    \\&= \| \tilde{\mbmu} \|^2 + 2 \tilde{\mbmu}^\tp \tilde{\mbSigma}^{\frac{1}{2}} \expec_{\Brv w}[\Brv w] + \expec_{\Brv w}[\mathrm{trace}(\Brv w^\tp \tilde{\mbSigma}^{\frac{\tp}{2}} \tilde{\mbSigma}^{\frac{1}{2}} \Brv w)]
    \\&= \| \tilde{\mbmu} \|^2 + \mathrm{trace}(\tilde{\mbSigma} \expec_{\Brv w}[\Brv w \Brv w^\tp])
    = \| \tilde{\mbmu} \|^2 + \mathrm{trace}(\tilde{\mbSigma})
    \\&= \| \mbA^{-1} \mby \|^2 + \mathrm{trace}(\mbA^{-1}\mbSigma\mbA^{-\tp}).
\end{align}
Now, we take the square root on both sides and plug the result into~\eqref{eq:expectation_wrt_w}.
Then, we use this in~\eqref{eq:to_be_bounded} to conclude.
\end{proof}

%%%%%%%%%%%%%%%%%%%%%%%%%%%%%%%%%%%%%%%%%%%%%%%%%%%%%%%%%%%%%%%%%%%%%%%%%%%%%%%
%%%%%%%%%%%%%%%%%%%%%%%%%%%%%%%%%%%%%%%%%%%%%%%%%%%%%%%%%%%%%%%%%%%%%%%%%%%%%%%
\begin{proof}[Proof of \Cref{thm:main_result}]
First, we show that the convergence of \( f_{\Brv h}^{(K)} \) to \( f_{\Brv h} \) implies the convergence of \( f_{\Brv y}^{(K)} \) to \( f_{\Brv y} \).
The \ac{pdf} of \( \Brv x := \mbA \Brv h \) is
\begin{equation}
    f_{\Brv x}(\mbx) = \frac{1}{|\det(\mbA)|} f_{\Brv h}(\mbA^{-1} \mbx)
\end{equation}
because \( \mbA \) is invertible.
Since the random vector \( \Brv y = \Brv x + \Brv n \) is a sum of two stochastically independent random vectors, its \ac{pdf} can be computed via convolution:
\begin{equation}\label{eq:convolution}
    f_{\Brv y}(\mby) = \int f_{\Brv n}(\mbs) f_{\Brv x}(\mby - \mbs) d\mbs.
\end{equation}
Similarly, \( f_{\Brv y}^{(K)} \) is obtained by replacing \( f_{\Brv x} \) with \( f_{\Brv x}^{(K)} \) in~\eqref{eq:convolution}.
For later reference, note that because \( f_{\Brv n} \) is positive (\( f_{\Brv n}(\mbs) > 0 \) for all \( \mbs \in \R^N \)) and \(f_{\Brv x}\) as well as \(f_{\Brv x}^{(K)}\) are continuous \acp{pdf}, the convolution results \( f_{\Brv y}^{(K)} \) and \( f_{\Brv y} \) are positive, too.
We have
\begin{align}
    &| f_{\Brv y}(\mby) - f_{\Brv y}^{(K)}(\mby) |
    \\&= \left| \int f_{\Brv n}(\mbs) \left(f_{\Brv x}(\mby - \mbs) - f_{\Brv x}^{(K)}(\mby - \mbs)\right) d\mbs \right|
    \\&\leq \int \left|f_{\Brv n}(\mbs) \frac{f_{\Brv h}(\mbA^{-1}(\mby - \mbs)) - f_{\Brv h}^{(K)}(\mbA^{-1}(\mby - \mbs))}{|\det(\mbA)|} \right| d\mbs
    \\&\leq \frac{\| f_{\Brv h} - f_{\Brv h}^{(K)} \|_\infty}{|\det(\mbA)|} \int | f_{\Brv n}(\mbs) | d\mbs = \frac{\| f_{\Brv h} - f_{\Brv h}^{(K)} \|_\infty}{|\det(\mbA)|}.
\end{align}
The last integral is equal to one because \( f_{\Brv n} \) is a \ac{pdf}.
Since \( \lim_{K\to\infty} \| f_{\Brv h} - f_{\Brv h}^{(K)} \|_\infty = 0 \) holds by assumption, we have
\begin{equation}\label{eq:convergence_fyK}
    \lim_{K\to\infty} \| f_{\Brv y} - f_{\Brv y}^{(K)} \|_\infty = 0.
\end{equation}

To show~\eqref{eq:convergence_estimator}, let \( \mby \in \calB_r \) be arbitrary.
With~\eqref{eq:conditional_mean} and~\eqref{eq:conditional_mean_K} in mind, we find the following upper bound:
\begin{multline}
    \| \hhat - \hhat^{(K)} \|
    \leq \int \| \mbh \| | f_{\Brv n}(\mby - \mbA\mbh) | \left| \frac{f_{\Brv h}(\mbh)}{f_{\Brv y}(\mby)} - \frac{f_{\Brv h}^{(K)}(\mbh)}{f_{\Brv y}^{(K)}(\mby)} \right| d \mbh
    \\\leq \sup_{\mbh\in\R^N} \left| \frac{f_{\Brv h}(\mbh)}{f_{\Brv y}(\mby)} - \frac{f_{\Brv h}^{(K)}(\mbh)}{f_{\Brv y}^{(K)}(\mby)} \right| \int \| \mbh \| f_{\Brv n}(\mby - \mbA\mbh) d \mbh.
\end{multline}
The last integral is independent of \( K \) and by \Cref{lem:bound_norm}, it is finite for any \( \mby \in \R^N \).
It is in particular bounded by
\begin{equation*}
    \sqrt{\det(\mbA^{-1}\mbA^{-\tp})} \sqrt{\| \mbA^{-1} \|^2 r^2 + \mathrm{trace}(\mbA^{-1}\mbSigma\mbA^{-\tp})} < \infty
\end{equation*}
for all \( \mby \in \calB_r \).
Hence, as soon as
\begin{equation}\label{eq:convergence_frac_xy}
    \lim_{K\to\infty} \sup_{\mbh\in\R^N} \left| \frac{f_{\Brv h}(\mbh)}{f_{\Brv y}(\mby)} - \frac{f_{\Brv h}^{(K)}(\mbh)}{f_{\Brv y}^{(K)}(\mby)} \right| = 0, \quad \forall \mby \in \calB_r
\end{equation}
is shown, \eqref{eq:convergence_estimator} is confirmed.
To prove~\eqref{eq:convergence_frac_xy}, we write
\begin{equation}\label{eq:frac_common_denominator}
   \left| \frac{f_{\Brv h}(\mbh)}{f_{\Brv y}(\mby)} - \frac{f_{\Brv h}^{(K)}(\mbh)}{f_{\Brv y}^{(K)}(\mby)} \right|
    =
    \left| \frac{f_{\Brv h}(\mbh) f_{\Brv y}^{(K)}(\mby) - f_{\Brv y}(\mby) f_{\Brv h}^{(K)}(\mbh)}{f_{\Brv y}(\mby) f_{\Brv y}^{(K)}(\mby)} \right|
\end{equation}
for an arbitrary \( \mbh \in \R^N \).
Now, we add \( 0 = f_{\Brv y}^{(K)}(\mby) f_{\Brv h}^{(K)}(\mbh) - f_{\Brv y}^{(K)}(\mby) f_{\Brv h}^{(K)}(\mbh) \) in the numerator on the right-hand side and apply the triangle inequality to get
\begin{align}
    &\left| \frac{f_{\Brv h}(\mbh) f_{\Brv y}^{(K)}(\mby) - f_{\Brv y}(\mby) f_{\Brv h}^{(K)}(\mbh) }{f_{\Brv y}(\mby) f_{\Brv y}^{(K)}(\mby)} \right|
    \\&\leq\nonumber
    \frac{\left| \left(f_{\Brv h}(\mbh) - f_{\Brv h}^{(K)}(\mbh)\right) f_{\Brv y}^{(K)}(\mby) \right|}{f_{\Brv y}(\mby) f_{\Brv y}^{(K)}(\mby)}
    \\&+
    \frac{\left| \left(f_{\Brv y}^{(K)}(\mby) - f_{\Brv y}(\mby)\right) f_{\Brv h}^{(K)}(\mbh) \right|}{f_{\Brv y}(\mby) f_{\Brv y}^{(K)}(\mby)}
    \\&\leq \label{eq:frac_upper_bound}
    \frac{\| f_{\Brv h} - f_{\Brv h}^{(K)} \|_\infty \| f_{\Brv y}^{(K)} \|_\infty
    + \| f_{\Brv y}^{(K)} - f_{\Brv y} \|_\infty \| f_{\Brv h}^{(K)} \|_\infty}{f_{\Brv y}(\mby) f_{\Brv y}^{(K)}(\mby)}.
\end{align}

By the compactness of \( \calB_r \) and continuity of \( f_{\Brv y} \), there exists a \( \mby_{\text{min}} \in \calB_r \) at which \( f_{\Brv y} \) attains a minimum value \( f_{\Brv y}(\mby_{\text{min}}) > 0 \) over \( \calB_r \).
Due to the uniform convergence~\eqref{eq:convergence_fyK}, there exists an index \( N_1 \in \N \) such that \( | f_{\Brv y}(\mby) - f_{\Brv y}^{(K)}(\mby)| \leq \frac{1}{2} f_{\Brv y}(\mby_{\text{min}}) \) holds for all \( K \geq N_1 \) and for all \( \mby \in \calB_r \).
The reverse triangle inequality then shows that \( f_{\Brv y}^{(K)}(\mby) \geq f_{\Brv y}(\mby) - | f_{\Brv y}(\mby) - f_{\Brv y}^{(K)}(\mby) | \geq f_{\Brv y}(\mby_{\text{min}}) - \frac{1}{2} f_{\Brv y}(\mby_{\text{min}}) = \frac{1}{2} f_{\Brv y}(\mby_{\text{min}}) \) is true.
Hence, with \( M_1 := \frac{1}{2} f_{\Brv y}(\mby_{\text{min}}) > 0 \), the inequality
\begin{equation}\label{eq:N1}
    f_{\Brv y}^{(K)}(\mby) \geq M_1
\end{equation}
holds for all \( \mby \in \calB_r \) and for all \( K \geq N_1 \).
Further, since \( \| f_{\Brv h} - f_{\Brv h}^{(K)} \|_\infty \to 0 \) and \( \| f_{\Brv h} \|_\infty < \infty \), there exist \( M_2 > 0 \) and \( N_2 \in \N \) such that
\begin{equation}\label{eq:N2}
    \| f_{\Brv h}^{(K)} \|_\infty \leq M_2 \quad\text{for all}\quad K \geq N_2.
\end{equation}
Analogously, there exist \( M_3 > 0 \) and \( N_3 \in \N \) such that
\begin{equation}\label{eq:N3}
    \| f_{\Brv y}^{(K)} \|_\infty \leq M_3 \quad\text{for all}\quad K \geq N_3.
\end{equation}

Let \( \varepsilon > 0 \) be arbitrary.
Due to \( \| f_{\Brv h} - f_{\Brv h}^{(K)} \|_\infty \to 0 \), there exists an index \( N_4 \geq \max\{N_1, N_3\} \) such that
\begin{equation}\label{eq:N4}
    \| f_{\Brv h} - f_{\Brv h}^{(K)} \|_\infty \leq \frac{f_{\Brv y}(\mby_{\text{min}}) M_1}{2 M_3} \varepsilon \quad\text{for all}\quad K \geq N_4.
\end{equation}
Similarly, there exists an index \( N_5 \geq \max\{N_1, N_2\} \) with
\begin{equation}\label{eq:N5}
    \| f_{\Brv y} - f_{\Brv y}^{(K)} \|_\infty \leq \frac{f_{\Brv y}(\mby_{\text{min}}) M_1}{2 M_2} \varepsilon \quad\text{for all}\quad K \geq N_5.
\end{equation}
We can use the last five inequalities to bound~\eqref{eq:frac_upper_bound}.
To this end, \( f_{\Brv y}(\mby_{\text{min}}) \) and \eqref{eq:N1} provide bounds on the terms in the denominator, \eqref{eq:N4} and \eqref{eq:N3} bound the first summand in the numerator, and \eqref{eq:N5} and \eqref{eq:N2} bound the second summand in the numerator.
In total, this yields an upper bound on~\eqref{eq:frac_common_denominator}:
\begin{align}
    \nonumber
    \left| \frac{f_{\Brv h}(\mbh)}{f_{\Brv y}(\mby)} - \frac{f_{\Brv h}^{(K)}(\mbh)}{f_{\Brv y}^{(K)}(\mby)} \right|
    &\leq \frac{f_{\Brv y}(\mby_{\text{min}}) M_1}{2 M_3} \varepsilon \cdot \frac{M_3}{f_{\Brv y}(\mby_{\text{min}}) M_1} \\&+ \frac{f_{\Brv y}(\mby_{\text{min}}) M_1}{2 M_2} \varepsilon \cdot \frac{M_2}{f_{\Brv y}(\mby_{\text{min}}) M_1}.
\end{align}
for all \( K \geq \max\{N_4, N_5\} \) and for all \( \mby \in \calB_r \).
We conclude
\begin{equation}
    \sup_{\mbh\in\R^N} \left| \frac{f_{\Brv h}(\mbh)}{f_{\Brv y}(\mby)} - \frac{f_{\Brv h}^{(K)}(\mbh)}{f_{\Brv y}^{(K)}(\mby)} \right|
    \leq \varepsilon
\end{equation}
for all \( K \geq \max\{N_4, N_5\} \) and all \( \mby \in \calB_r \), and because \( \varepsilon \) was arbitrary, \eqref{eq:convergence_frac_xy} is confirmed, which finishes the proof.
\end{proof}

%%%%%%%%%%%%%%%%%%%%%%%%%%%%%%%%%%%%%%%%%%%%%%%%%%%%%%%%%%%%%%%%%%%%%%%%%%%%%%%
%%%%%%%%%%%%%%%%%%%%%%%%%%%%%%%%%%%%%%%%%%%%%%%%%%%%%%%%%%%%%%%%%%%%%%%%%%%%%%%
\section{Integral for Noninvertible Matrices}\label{sec:noninvertible_A}

To see why \( \int \| \mbh \| f_{\mbn}(\mby - \mbA \mbh) d\mbh \) might not be finite, consider the matrix \( \mbA = [\mbI, \mbzero] \in \R^{m\times N} \) where \( \mbI \in \R^{m\times m} \) is the identity matrix and the remaining matrix elements are zero.
Let us write \( \mbh = [\mbh_m^\tp, \mbh_{N-m}^\tp]^\tp \in \R^m \times \R^{N-m} \).
Define the set \( \calC = \{ \mbh \in \R^N \mid \| \mbh_m \| \geq 1, \| \mbh_{N-m} \| \geq 1 \} \) where the norm of both sub-vectors \( \mbh_m \) and \( \mbh_{N-m} \) is at least one such that we always have \( \| \mbh \| \geq 1 \) on \( \calC \).
We can now compute:
\begin{align}
    &\int_{\R^N} \| \mbh \| f_{\mbn}(\mby - \mbA \mbh) d\mbh
    \geq \int_{\calC} f_{\mbn}(\mby - \mbA \mbh) d\mbh
    \\&= \int_{\| \mbh_{N-m} \| \geq 1} \int_{\| \mbh_m \| \geq 1} f_{\mbn}(\mby - \mbh_m) d\mbh_m d\mbh_{N-m}.
\end{align}
Since \( f_n \) is an \( m \)-dimensional Gaussian \ac{pdf}, the inner integral is equal to some constant \( c \) with \( 0 < c < 1 \) and it follows that \( \int \| \mbh \| f_{\mbn}(\mby - \mbA \mbh) d\mbh \) is not finite.

%%%%%%%%%%%%%%%%%%%%%%%%%%%%%%%%%%%%%%%%%%%%%%%%%%%%%%%%%%%%%%%%%%%%%%%%%%%%%%%
%%%%%%%%%%%%%%%%%%%%%%%%%%%%%%%%%%%%%%%%%%%%%%%%%%%%%%%%%%%%%%%%%%%%%%%%%%%%%%%
\section{On the Uniform Convergence}\label{sec:no_uniform_convergence}

Let us express the \ac{pdf} of \( \mbA \mbh \).
If \( \mbA \in \R^{m\times N} \) is a wide matrix with full rank \( m \), we can assume that the first \( m \) columns are linearly independent (otherwise we introduce a permutation matrix).
This allows us to partition \( \mbA = [\mbA_i, \mbA_n] \) into an invertible part \( \mbA_i \in \R^{m\times m} \)
and a noninvertible part \( \mbA_n \in \R^{m\times N-m} \).
With a corresponding partitioning of \( \mbh = [\mbh_i^\tp, \mbh_n^\tp]^\tp \in \R^m \times \R^{N-m} \),
we have \( \mbx = \mbA \mbh = \mbA_i \mbh_i + \mbA_n \mbh_n \) and
we can define an invertible mapping \( t: \R^N \to \R^N \):
\begin{align}
    t: (\mbh_i, \mbh_n) &\mapsto (\mbA_i \mbh_i + \mbA_n \mbh_n, \mbh_n) = (\mbx, \mbx')
    \\t^{-1}: (\mbx, \mbx') &\mapsto (\mbA_i^{-1}(\mbx - \mbA_n \mbx'), \mbx') = (\mbh_i, \mbh_n).
\end{align}
We can now compute the joint density \( f_{\mbx,\mbx'} \) with the usual transformation formula:
\begin{equation}
    f_{\mbx, \mbx'}(\mbx, \mbx') =
    \frac{f_{\mbh}(t^{-1}(\mbx, \mbx'))}{\left|\det\left(\frac{\partial t}{\partial \mbh}(t^{-1}(\mbx, \mbx'))\right)\right|}.
\end{equation}
Together with \( \left|\det\left(\frac{\partial t}{\partial \mbh}(t^{-1}(\mbx, \mbx'))\right)\right| = | \det(\mbA_i) | \),
we can express the \ac{pdf} of \( \mbx = \mbA \mbh \) via marginalization:
\begin{equation}
    f_{\mbx}(\mbx)
    = \int_{\R^{N-m}} f_{\mbx,\mbx'}(\mbx,\mbx') d \mbx'
    = \int_{\R^{N-m}} \frac{f_{\mbh}(t^{-1}(\mbx, \mbx'))}{| \det(\mbA_i) |} d \mbx'.
\end{equation}
Analogously, one obtains \( f_{\mbx}^{(K)} \) for \( \mbx^{(K)} = \mbA \mbh^{(K)} \).
Given
\begin{multline}
    |f_{\mbx}^{(K)}(\mbx) - f_{\mbx}(\mbx)| = \frac{1}{| \det(\mbA_i) |} \times \\ \left| \int_{\R^{N-m}} \left(  f_{\mbh}^{(K)}(t^{-1}(\mbx, \mbx')) - f_{\mbh}(t^{-1}(\mbx, \mbx')) \right) d \mbx' \right|,
\end{multline}
we can conjecture that due to the integral over \( \R^{N-m} \) the uniform convergence of \( f_{\mbh}^{(K)} \) to \( f_{\mbh} \) alone is generally not sufficient to infer the uniform convergence of \( f_{\mbx}^{(K)} \) to \( f_{\mbx} \).

%%%%%%%%%%%%%%%%%%%%%%%%%%%%%%%%%%%%%%%%%%%%%%%%%%%%%%%%%%%%%%%%%%%%%%%%%%%%%%%
%%%%%%%%%%%%%%%%%%%%%%%%%%%%%%%%%%%%%%%%%%%%%%%%%%%%%%%%%%%%%%%%%%%%%%%%%%%%%%%
\bibliographystyle{IEEEtran}
\bibliography{IEEEabrv,references}

% Generated by IEEEtran.bst, version: 1.14 (2015/08/26)
\begin{thebibliography}{10}
\providecommand{\url}[1]{#1}
\csname url@samestyle\endcsname
\providecommand{\newblock}{\relax}
\providecommand{\bibinfo}[2]{#2}
\providecommand{\BIBentrySTDinterwordspacing}{\spaceskip=0pt\relax}
\providecommand{\BIBentryALTinterwordstretchfactor}{4}
\providecommand{\BIBentryALTinterwordspacing}{\spaceskip=\fontdimen2\font plus
\BIBentryALTinterwordstretchfactor\fontdimen3\font minus
  \fontdimen4\font\relax}
\providecommand{\BIBforeignlanguage}[2]{{%
\expandafter\ifx\csname l@#1\endcsname\relax
\typeout{** WARNING: IEEEtran.bst: No hyphenation pattern has been}%
\typeout{** loaded for the language `#1'. Using the pattern for}%
\typeout{** the default language instead.}%
\else
\language=\csname l@#1\endcsname
\fi
#2}}
\providecommand{\BIBdecl}{\relax}
\BIBdecl

\bibitem{KoFeTuUt22}
M.~Koller, B.~Fesl, N.~Turan, and W.~Utschick, ``An asymptotically optimal
  approximation of the conditional mean channel estimator based on {Gaussian}
  mixture models,'' in \emph{2022 IEEE Int. Conf. Acoust., Speech and Signal
  Process. (ICASSP)}, 2022, pp. 5268--5272.

\bibitem{RuPeLaLaMaEdTu13}
F.~Rusek, D.~Persson, B.~K. Lau, E.~G. Larsson, T.~L. Marzetta, O.~Edfors, and
  F.~Tufvesson, ``Scaling up {MIMO}: Opportunities and challenges with very
  large arrays,'' \emph{{IEEE} Signal Process. Mag.}, vol.~30, no.~1, pp.
  40--60, Jan. 2013.

\bibitem{ArDeAxMo14}
D.~C. Araújo, A.~L.~F. de~Almeida, J.~Axnäs, and J.~C.~M. Mota, ``Channel
  estimation for millimeter-wave very-large {MIMO} systems,'' in \emph{2014
  22nd Eur. Signal Process. Conf. (EUSIPCO)}, Sep. 2014, pp. 81--85.

\bibitem{SaRa16}
M.~K. Samimi and T.~S. Rappaport, ``{3-D} millimeter-wave statistical channel
  model for {5G} wireless system design,'' \emph{{IEEE} Trans. Microw. Theory
  Techn.}, vol.~64, no.~7, pp. 2207--2225, Jul. 2016.

\bibitem{3gpp}
3GPP, ``Spatial channel model for multiple input multiple output ({MIMO})
  simulations,'' 3rd Generation Partnership Project (3GPP), Tech. Rep. 25.996
  (V16.0.0), Jul. 2020.

\bibitem{QuaDRiGa1}
S.~{Jaeckel}, L.~{Raschkowski}, K.~{Börner}, and L.~{Thiele}, ``{QuaDRiGa}: A
  {3-D} multi-cell channel model with time evolution for enabling virtual field
  trials,'' \emph{{IEEE} Trans. Antennas Propag.}, vol.~62, no.~6, pp.
  3242--3256, 2014.

\bibitem{QuaDRiGa2}
S.~{Jaeckel}, L.~{Raschkowski}, K.~{Börner}, L.~{Thiele}, F.~{Burkhardt}, and
  E.~{Eberlein}, ``{QuaDRiGa}: Quasi deterministic radio channel generator,
  user manual and documentation,'' Fraunhofer Heinrich Hertz Institute, Tech.
  Rep., v2.2.0, 2019.

\bibitem{Al19}
A.~Alkhateeb, ``{DeepMIMO}: A generic deep learning dataset for millimeter wave
  and massive {MIMO} applications,'' in \emph{Proc. of Inf. Theory and Appl.
  Workshop (ITA)}, San Diego, CA, Feb. 2019, pp. 1--8.

\bibitem{BuHuMuDa18}
S.~A. Busari, K.~M.~S. Huq, S.~Mumtaz, L.~Dai, and J.~Rodriguez,
  ``Millimeter-wave massive {MIMO} communication for future wireless systems: A
  survey,'' \emph{{IEEE} Commun. Surveys Tuts.}, vol.~20, no.~2, pp. 836--869,
  Secondquarter 2018.

\bibitem{HaMaZwDa20}
K.~Hassan, M.~Masarra, M.~Zwingelstein, and I.~Dayoub, ``Channel estimation
  techniques for millimeter-wave communication systems: Achievements and
  challenges,'' \emph{IEEE Open J. Commun. Soc.}, vol.~1, pp. 1336--1363, 2020.

\bibitem{NgNgChMc20}
T.~T. Nguyen, H.~D. Nguyen, F.~Chamroukhi, and G.~J. McLachlan, ``Approximation
  by finite mixtures of continuous density functions that vanish at infinity,''
  \emph{Cogent Math. Statist.}, vol.~7, no.~1, p. 1750861, 2020.

\bibitem{GuZh19}
Y.~Gu and Y.~D. Zhang, ``Information-theoretic pilot design for downlink
  channel estimation in {FDD} massive {MIMO} systems,'' \emph{{IEEE} Trans.
  Signal Process.}, vol.~67, no.~9, pp. 2334--2346, May 2019.

\bibitem{CoErPuBa02}
S.~Coleri, M.~Ergen, A.~Puri, and A.~Bahai, ``Channel estimation techniques
  based on pilot arrangement in {OFDM} systems,'' \emph{{IEEE} Trans.
  Broadcast.}, vol.~48, no.~3, pp. 223--229, 2002.

\bibitem{bookBi06}
C.~M. Bishop, \emph{Pattern Recognition and Machine Learning (Information
  Science and Statistics)}.\hskip 1em plus 0.5em minus 0.4em\relax Berlin,
  Heidelberg: Springer-Verlag, 2006.

\bibitem{ViSc13}
J.~P. Vila and P.~Schniter, ``Expectation-maximization {Gaussian}-mixture
  approximate message passing,'' \emph{{IEEE} Trans. Signal Process.}, vol.~61,
  no.~19, pp. 4658--4672, Oct. 2013.

\bibitem{WeJiWoChTi15}
C.-K. Wen, S.~Jin, K.-K. Wong, J.-C. Chen, and P.~Ting, ``Channel estimation
  for massive {MIMO} using {Gaussian}-mixture {Bayesian} learning,''
  \emph{{IEEE} Trans. Wireless Commun.}, vol.~14, no.~3, pp. 1356--1368, Mar.
  2015.

\bibitem{SuWa19}
P.~Su and Y.~Wang, ``Channel estimation in massive {MIMO} systems using a
  modified {Bayes-GMM} method,'' \emph{Wireless Personal Commun.}, vol. 107,
  Aug. 2019.

\bibitem{WeHuDa21}
X.~Wei, C.~Hu, and L.~Dai, ``Deep learning for beamspace channel estimation in
  millimeter-wave massive {MIMO} systems,'' \emph{{IEEE} Trans. Commun.},
  vol.~69, no.~1, pp. 182--193, Jan. 2021.

\bibitem{MuMiDe20}
P.~Mukherjee, D.~Mishra, and S.~De, ``{Gaussian} mixture based context-aware
  short-term characterization of wireless channels,'' \emph{{IEEE} Trans. Veh.
  Technol.}, vol.~69, no.~1, pp. 26--40, Jan. 2020.

\bibitem{ZhXuMeZhXu19}
H.~Zhang, J.~Xue, D.~Meng, Q.~Zhao, and Z.~Xu, ``Robust {CSI} estimation under
  complex communication environment,'' in \emph{ICC 2019 - 2019 IEEE Int. Conf.
  on Commun. (ICC)}, May 2019, pp. 1--6.

\bibitem{NaRa18}
E.~Nayebi and B.~D. Rao, ``Semi-blind channel estimation in massive {MIMO}
  systems with different priors on data symbols,'' in \emph{2018 IEEE Int.
  Conf. Acoust., Speech and Signal Process. (ICASSP)}, Apr. 2018, pp.
  3879--3883.

\bibitem{LiZhMaZh20}
Y.~Li, J.~Zhang, Z.~Ma, and Y.~Zhang, ``Clustering analysis in the wireless
  propagation channel with a variational {Gaussian} mixture model,''
  \emph{{IEEE} Trans. Big Data}, vol.~6, no.~2, pp. 223--232, Jun. 2020.

\bibitem{LiZhTaTi19}
Y.~Li, J.~Zhang, P.~Tang, and L.~Tian, ``Clustering in the wireless channel
  with a power weighted statistical mixture model in indoor scenario,''
  \emph{China Commun.}, vol.~16, no.~7, pp. 83--95, Jul. 2019.

\bibitem{JiYuWaXuYu20}
S.~Jiang, X.~Yuan, X.~Wang, C.~Xu, and W.~Yu, ``Joint user identification,
  channel estimation, and signal detection for grant-free {NOMA},''
  \emph{{IEEE} Trans. Wireless Commun.}, vol.~19, no.~10, pp. 6960--6976, Oct.
  2020.

\bibitem{GuZh18}
Y.~Gu and Y.~D. Zhang, ``Pilot design for {Gaussian} mixture channel estimation
  in massive {MIMO},'' in \emph{2018 IEEE Int. Conf. Acoust., Speech and Signal
  Process. (ICASSP)}, Apr. 2018, pp. 3266--3270.

\bibitem{bookSc16}
G.~Schay, \emph{Introduction to Probability with Statistical
  Applications}.\hskip 1em plus 0.5em minus 0.4em\relax Birkh{\"a}user Boston,
  Jun. 2016.

\bibitem{Gr06}
R.~M. Gray, ``Toeplitz and circulant matrices: {A} review,'' \emph{Foundations
  and Trends\textsuperscript{\textregistered} in Communications and Information
  Theory}, no.~3, pp. 155--239, 2006.

\bibitem{KeScPeMoFr02}
J.~Kermoal, L.~Schumacher, K.~Pedersen, P.~Mogensen, and F.~Frederiksen, ``A
  stochastic {MIMO} radio channel model with experimental validation,''
  \emph{{IEEE} J. Sel. Areas Commun.}, vol.~20, no.~6, pp. 1211--1226, 2002.

\bibitem{SiMeWrRu10}
M.~Šimko, C.~Mehlführer, M.~Wrulich, and M.~Rupp, ``Doubly dispersive channel
  estimation with scalable complexity,'' in \emph{2010 Int. ITG Workshop on
  Smart Antennas (WSA)}, Feb. 2010, pp. 251--256.

\bibitem{Go94}
E.~M. Goggin, ``Convergence in distribution of conditional expectations,''
  \emph{The Annals of Probability}, vol.~22, no.~2, pp. 1097--1114, 1994.

\bibitem{bookRe03}
S.~Resnick, \emph{A Probability Path}, ser. Modern Birkh{\"a}user
  Classics.\hskip 1em plus 0.5em minus 0.4em\relax Birkh{\"a}user Boston, 2005.

\bibitem{bookKl08}
A.~Klenke, \emph{Probability Theory: A Comprehensive Course}.\hskip 1em plus
  0.5em minus 0.4em\relax Springer, 2008.

\bibitem{NeWiUt18}
D.~Neumann, T.~Wiese, and W.~Utschick, ``Learning the {MMSE} channel
  estimator,'' \emph{{IEEE} Trans. Signal Process.}, vol.~66, no.~11, pp.
  2905--2917, Jun. 2018.

\bibitem{KuDaJaTh19}
M.~Kurras, S.~Dai, S.~Jaeckel, and L.~Thiele, ``Evaluation of the spatial
  consistency feature in the {3GPP} geometry-based stochastic channel model,''
  in \emph{2019 IEEE Wireless Communications and Networking Conference (WCNC)},
  Apr. 2019, pp. 1--6.

\bibitem{DaMaAv97}
G.~Davis, S.~Mallat, and M.~Avellaneda, ``\BIBforeignlanguage{en}{Adaptive
  greedy approximations},'' \emph{\BIBforeignlanguage{en}{Constructive
  Approximation}}, vol.~13, no.~1, pp. 57--98, Mar. 1997.

\bibitem{PaReKr93}
Y.~C. Pati, R.~Rezaiifar, and P.~S. Krishnaprasad, ``Orthogonal matching
  pursuit: {Recursive} function approximation with applications to wavelet
  decomposition,'' in \emph{Proc. 27th Asilomar Conf. Signals, Systems, and
  Computers}, Nov. 1993, pp. 40--44 vol.1, iSSN: 1058-6393.

\bibitem{Tr04}
J.~A. Tropp, ``Greed is good: {Algorithmic} results for sparse approximation,''
  \emph{{IEEE} Trans. Inf. Theory}, vol.~50, no.~10, pp. 2231--2242, Oct. 2004.

\bibitem{DoMaMo10}
D.~L. Donoho, A.~Maleki, and A.~Montanari, ``Message passing algorithms for
  compressed sensing: I. motivation and construction,'' in \emph{2010 IEEE Inf.
  Theory Workshop on Inf. Theory (ITW 2010, Cairo)}, Jan. 2010, pp. 1--5.

\bibitem{MaAnYaBa13}
A.~Maleki, L.~Anitori, Z.~Yang, and R.~G. Baraniuk, ``Asymptotic analysis of
  complex {LASSO} via complex approximate message passing {(CAMP)},''
  \emph{{IEEE} Trans. Inf. Theory}, vol.~59, no.~7, pp. 4290--4308, Jul. 2013.

\bibitem{FeTuKoUt21}
B.~Fesl, N.~Turan, M.~Koller, and W.~Utschick, ``A low-complexity {MIMO}
  channel estimator with implicit structure of a convolutional neural
  network,'' in \emph{2021 IEEE 22nd Int. Workshop on Signal Process. Adv. in
  Wireless Commun. (SPAWC)}, 2021, accepted, arXiv preprint: 2104.12667.

\bibitem{SoPoSh20}
M.~Soltani, V.~Pourahmadi, and H.~Sheikhzadeh, ``Pilot pattern design for deep
  learning-based channel estimation in {OFDM} systems,'' \emph{{IEEE} Wireless
  Commun. Lett.}, vol.~9, no.~12, pp. 2173--2176, 2020.

\bibitem{SoPoMiSh19}
M.~Soltani, V.~Pourahmadi, A.~Mirzaei, and H.~Sheikhzadeh, ``Deep
  learning-based channel estimation,'' \emph{{IEEE} Commun. Lett.}, vol.~23,
  no.~4, pp. 652--655, 2019.

\bibitem{AlLeHe15}
A.~Alkhateeb, G.~Leus, and R.~W. Heath, ``Compressed sensing based multi-user
  millimeter wave systems: {How} many measurements are needed?'' in \emph{2015
  IEEE Int. Conf. Acoust., Speech and Signal Process. (ICASSP)}, 2015, pp.
  2909--2913.

\bibitem{BaAbZo19}
M.~F. Bal{\i}n, A.~Abid, and J.~Zou, ``Concrete autoencoders: Differentiable
  feature selection and reconstruction,'' in \emph{Proceedings of the 36th Int.
  Conf. on Machine Learning}, vol.~97.\hskip 1em plus 0.5em minus 0.4em\relax
  PMLR, Jun. 2019, pp. 444--453.

\bibitem{3GPP_5G}
3GPP, ``{NR}; {Physical} channels and modulation,'' 3rd Generation Partnership
  Project (3GPP), Tech. Spec. 38.211 (V16.7.0), Sep. 2021.

\end{thebibliography}

\end{document}